\newcommand{\op}[1]{\operatorname{#1}}
\newcommand{\im}{\mathrm{i}}
\newcommand{\bb}[1]{\mathbb{#1}}
\newcommand{\bbC}{\mathbb{C}}
\newcommand{\bbZ}{\mathbb{Z}}
\newcommand{\bbR}{\mathbb{R}}
\renewcommand{\cal}[1]{\mathcal{#1}}
\newcommand{\diff}{\mathrm{d}}
\newcommand{\id}{\mathds{1}}
\newcommand{\eul}{\mathrm{e}}
\newcommand{\ul}[1]{\underline{#1}}
\definecolor{blueBand}{RGB}{153,204,255} 
\definecolor{blueBandEdge}{RGB}{47,129,188} 
\definecolor{EdgeState}{RGB}{255,128,16} 
\definecolor{FermiLine}{RGB}{224,93,93} 
\definecolor{nicerGray}{RGB}{164,164,164} 
\definecolor{niceGreen}{RGB}{64,198,77}
\newcolumntype{Y}{>{\centering\arraybackslash}X}
\tikzset{
	pics/.cd,
	vector in/.style args={#1/#2/#3}{
		code={
			\draw[#1] (0,0)  circle (#3);
			\draw[#2] (45:#3) -- (225:#3) (135:#3) -- (315:#3);
		}
	}
}
\tikzset{
	pics/vector out/.style args={#1/#2/#3}{
		code={
			\draw[#1] (0,0)  circle (#3);
			\fill[#2] (0,0)  circle (#3/4);
		}
	}
}
\theoremstyle{thmstyleone}%
\newtheorem{theorem}{Theorem}
\newtheorem{proposition}[theorem]{Proposition}%
\newtheorem{lemma}[theorem]{Lemma}%
\theoremstyle{thmstyletwo}%
\newtheorem{remark}{Remark}%
\theoremstyle{thmstylethree}%
\newtheorem{definition}{Definition}%
\begin{document}

\title[Article Title]{Topology of 2D Dirac operators with variable mass and an application to shallow-water waves}


\author[1]{\fnm{Sylvain} \sur{Rossi}}\email{syrossi@student.ethz.ch}

\author[1]{\fnm{Alessandro} \sur{Tarantola}}\email{ataranto@phys.ethz.ch}

\affil[1]{\orgdiv{Institute for Theoretical Physics}, \orgname{ETH Z\"urich}, \orgaddress{\street{Wolfgang-Pauli-Str. 27}, \city{Z\"urich}, \postcode{8093}, \state{Z\"urich}, \country{Switzerland}}}




\abstract{A Dirac operator on the plane with constant (positive) mass is a Chern insulator, sitting in class D of the Kitaev table. Despite its simplicity, this system is topologically ill-behaved: the non-compact Brillouin zone prevents definition of a bulk invariant, and naively placing the model on a manifold with boundary results in violations of the bulk-edge correspondence (BEC). We overcome both issues by letting the mass spatially vary in the vertical direction, interpolating between the original model and its negative-mass counterpart. Proper bulk and edge indices can now be defined. They are shown to coincide, thereby embodying BEC.

The shallow-water model exhibits the same illnesses as the 2D massive Dirac. Identical problems suggest identical solutions, and indeed extending the approach above to this setting yields proper indices and another instance of BEC.
}

\maketitle

\section{Introduction}\label{sec:Intro}

Dirac's model is almost ubiquitous in modern physics. Originally introduced to describe relativistic electrons \cite{Dirac1928}, it has more recently found applications in condensed matter \cite{Cayssol13, Wehling14}. Band crossings, effectively described by Dirac-type Hamiltonians, appear in Weyl semimetals \cite{Yan17,Armitage18}, graphene \cite{Geim07} and many other platforms \cite{VV14}. When the host material is topological, such Dirac cones become central in describing the gap closing and quantum phase transition. Indeed, the classification of the Clifford algebras associated with these Dirac Hamiltonians, in presence of the symmetries allowed by the ten-fold way \cite{TenfoldWay}, is one of the routes that lead to the derivation of the periodic table of topological insulators and superconductors \cite{KitaevTable,Schnyder08}.

Yet, surprisingly, natural variants of the relativistic model seem impervious to proper topological treatment. For example, introduce a non-zero mass term and place the model on the 2D plane. From a topology perspective, each of the two modifications is problematic in its own way, as we argue below.
	
In the massless setting, a particle is characterized by the sign of its chirality. The picture extends seamlessly to Weyl semimetals, where one assigns to each band crossing the $ \pm 1 $ chirality of its effective description in terms of massless Dirac fermions \cite{Yan17,Armitage18,Witten16}. Introduction of a (constant) mass term spoils this intuition by mixing the different chirality sectors \cite{PeskinQFT}.

If chirality cannot serve as an index for the massive theory, perhaps its topology is captured by the first Chern number \cite{Hartshorne} of its associated Bloch bundle \cite{TKNN,ProdanSB} (as is the case for other Class D insulators, cf.\,\cite{KitaevTable}). This possibility is hindered by our choice of position space $ \bbR^2 $. By inducing a non-compact Brillouin zone, it prevents the application of the Gauss-Bonnet theorem \cite{GaussBonnet}, destroying the topological significance of the Chern number. 

For the reasons above, and to the best of the authors' knowledge, no good (bulk) invariant can be associated to the massive Dirac Hamiltonian on $ \bbR^2 $. This issue may be circumvented by considering the same model on a manifold with boundary (e.g.\,the upper half-plane $ \bbR \times \bbR_+ $). If an edge invariant emerges (typically, the signed number of gapless modes propagating along the boundary \cite{Angelakis14,Grosse86,Hu22}), its bulk-counterpart might be obtained through \textit{bulk-edge correspondence} (BEC for short, see seminal works \cite{Hatsugai93PRL,Hatsugai93PRB}, selected references \cite{SchulzBaldes99, Elgart05, Elbau02} and modern review \cite{Shapiro20}).

The failure of this approach is revealed by inspecting the edge states. As noticed in  \cite{Gruber06} and argued in App.\,\ref{app:BECFailureDirac}, a single edge mode appears only for half of the allowed boundary conditions. We are confronted with two equally grim alternatives: (i) If BEC is enforced, i.e.\,the edge and bulk indices coincide, the latter attains a (nonsensical) dependence on the boundary condition; (ii) If BEC is not enforced, it is immediately disproven because a good bulk invariant (independent of boundary physics) will not match the edge index in at least half of the cases.

The first part of the manuscript is concerned with the claim that, despite the failures above, a topologically tractable system is obtained by introduction of a \textit{soft} boundary. Rather than cutting the model abruptly, one endows the mass with a position dependence, and lets it change sign. Where the mass is zero, the \enquote{local} Hamiltonian is gapless and formation of bound (interface) states is expected. Indeed, a field-theoretical analysis of this system is found in \cite{JackiwRebbi}. The authors discover edge fields with fermion number $ 1/2 $, a feature reminiscent of the situation of App.\,\ref{app:BECFailureDirac}, where edge states appear half of the time.

Various results in the literature go in the direction of our claim. An informal bulk invariant was proposed in \cite{Witten16}, see below Eq.\,(3.22). It was later formalized by the author of \cite{BalInterface}, who moreover introduced bulk and edge indices for a larger class of \textit{interface Hamiltonians} \cite{JackiwRebbi,HK10}, and further studied such systems in \cite{BHW23,Bal22,QB23}. Our novel contributions consist in an intuitive, yet rigorous geometric derivation of the bulk index of \cite{BalInterface}, along with the introduction of a different edge index.

The findings of the first part of this work can thus be summarized as follows. We start by joining the positive-mass 2D Dirac model with its negative-mass counterpart, inspired by the idea that \enquote{fermions always come in pairs} (cf.\,Nielsen-Ninomiya theorem \cite{NNFirst,NNSecond}). In practice, the pairing is performed by writing the mass term as a function of the second component $x_2$ of the two-dimensional position operator $ \ul{x} = (x_1,x_2) $. The mass profile changes sign at $x_2=0$ and saturates to a constant value $ m_\pm \gtrless 0 $ at $ x_2 \to \pm \infty $. Considering both asymptotic Hamiltonians ($ x_2 \to \pm \infty $) at once allows for a compactification of the Brillouin zone. The Chern number of the Bloch bundle constructed upon it is the new bulk index. It is equal to the signed number of states that localize around the $x_2=0$ interface, and such equality is an instance of bulk-edge correspondence.

In the second part of the manuscript, the technology developed for Dirac Hamiltonians is applied to a related model. Topological indices are defined and another instance of bulk-edge correspondence proven.

Recall indeed that the 2D Dirac Hamiltonian describes spin-$ 1/2 $ particles. Its spin-$1$ counterpart happens to coincide with the Hamiltonian of the rotating shallow-water model \cite{DMV17, Iga95}. The latter is a hydrodynamical model derived from Euler's equations and used to describe the dynamics of thin layers of fluid lying on a rotating bottom. The angular velocity $f$ of this rotation plays the same role as the mass in the Dirac setting. The shallow-water model, upon addition of an odd-viscous term, displays an anomalous bulk-edge correspondence \cite{Tauber19,TDV20} or violates it altogether \cite{GJT21} depending on the boundary condition. After defining the bulk index in complete analogy with the Dirac case, we show that no such violation is present in our setting, at least for a specific profile of the variable angular velocity.

The paper is organized as follows. In Section \ref{sec:Setup}, we introduce the Dirac Hamiltonian with constant mass, naively compute its \enquote{Chern number} and argue why it is ill-defined. The alternative model with variable mass is presented and its essential spectrum found. In Sec.\,\ref{sec:Bulk}, we propose a bulk index for the new model and prove it is topological. In Sec.\,\ref{sec:edge}, we define the edge index, compute it and prove its independence from the choice of mass profile. The shallow-water model is introduced in Sec.\,\ref{sec:SW} as the spin-$1$ counterpart of the previously studied Hamiltonian. A bulk index is defined in complete analogy with Section \ref{sec:Bulk}, and the edge index is computed in a simple case. The two coincide. Sec.\,\ref{sec:Conclusion} is finally devoted to conclusions and future prospects. 

\section{Setup: Two-dimensional Dirac Hamiltonians with (non)-constant mass} \label{sec:Setup}

A Dirac Hamiltonian with constant mass is introduced. Some of its spectral properties are listed, and the Bloch bundle associated with its positive-energy band constructed. Naively computing its \enquote{Chern number} yields non-integer values. A related model with non-constant mass is introduced and its essential spectrum specified. We claim it can be equipped with a well-defined Chern number, which serves as bulk invariant. The proof of such claim is the content of Section \ref{sec:Bulk}.

Consider a spin-$1/2$ particle on $ \bbR^2 $, with Hilbert space $ \cal{H} = L^2 (\bbR^2) \otimes \bbC^2 $. Describe its dynamics by a \textit{two-dimensional Dirac Hamiltonian} $ H_{\pm} $, written in terms of Pauli matrices as
\begin{equation}
	H_\pm \coloneqq (p_1, p_2, m_\pm) \cdot \vec{\sigma} =
	\begin{pmatrix}
		m_\pm & - \im \partial_1 - \partial_2 \\
		- \im \partial_1 + \partial_2 & - m_\pm 
	\end{pmatrix} \,, \label{eq:HPlusMinus}
\end{equation}
where $ \vec{\sigma} = (\sigma_1, \sigma_2, \sigma_3) $, $ p_j \coloneqq - \im \partial_j \,, \ (j = 1,2) $ momentum operator in direction $j$ and $ m_\pm \gtrless 0 $ is a positive (negative) constant.

The operators $ H_\pm $ enjoy a particle-hole symmetry $ C H_\pm C^{-1} = - H_\pm $, where $ C = \sigma_1 K $ and $K$ denotes complex conjugation. Neither time-reversal nor chiral symmetry are present. The model therefore sits in class $D$ w.r.t.\,the Kitaev table \cite{KitaevTable}, and a $ \bbZ $-valued invariant is expected. The Hamiltonians are moreover translation invariant, and their Fourier transform reads
\begin{equation}
	H_\pm \coloneqq (k_1, k_2, m_\pm) \cdot \vec{\sigma} = \vec{d}_\pm (\ul{k}) \cdot \vec{\sigma} \,, \label{eq:HPlusMinusFourier}
\end{equation}
where $ \ul{k}=(k_1,k_2) \ni \bbR^2 $ is a point in the non-compact Brillouin zone $ \bbR^2 $, dual of position space $ \bbR^2 \ni \ul{x} = (x_1,x_2) $. The spectra are purely essential, and consist of two bands 
\begin{equation}
	\pm \omega_+ (\ul{k}) = \pm | \vec{d}_+ (\ul{k}) | = \pm \sqrt{k^2 + m_+^2}
\end{equation}
for $ H_+ $ and 
\begin{equation}
	\pm \omega_- (\ul{k}) = \pm | \vec{d}_- (\ul{k}) | = \pm \sqrt{k^2 + m_-^2}
\end{equation}
for $ H_- $, with $ k \coloneqq \sqrt{k_1^2 + k_2^2} $.

To discuss the topology of $ H_\pm $, one has to associate a Bloch bundle to their bands. We can w.l.o.g.\,restrict our discussion to the top (positive) bands, because the bottom ones are their symmetric counterpart under particle-hole conjugation. Consider the flattened Hamiltonians
\begin{equation}
	H'_\pm (\ul{k}) = \vec{e}_\pm (\ul{k}) \cdot \vec{\sigma} \coloneqq \frac{\vec{d}_\pm (\ul{k})}{| \vec{d}_\pm (\ul{k}) |} \cdot \vec{\sigma} = \frac{1}{\sqrt{k^2 + m_\pm^2}} (k_1, k_2, m_\pm) \cdot \vec{\sigma} \label{eq:HPlusFlat}
\end{equation}
and use $ \vec{e}_\pm $ to construct the projections
\begin{equation}
	P_\pm (\ul{k}) =  \frac{ \id + (\vec{e}_\pm (\ul{k}) \cdot \vec{\sigma}) }{2} \label{eq:SpecProj}
\end{equation}
onto the positive bands. Their associated Bloch bundles then read
\begin{equation}
	\cal{E}_\pm \coloneqq \{ (\ul{k}, \psi_{\ul{k}}) \ | \ \ul{k} \in \bbR^2 \,, \ \psi_{\ul{k}} \in \op{ran} (P_\pm (\ul{k})) \subset \bbC^2 \} \,.
\end{equation}
If $ \cal{E}_\pm $ had a compact base space $ \Omega $ in place of $ \bbR^2 $, its (rightful) Chern number could be computed by (Prop.\,2.1 in \cite{GJT21})
\begin{equation}
	Ch (\cal{E}_\pm) = \frac{1}{2 \pi} \int_{\Omega} \vec{e}_\pm (\ul{k}) \cdot (\partial_{k_1} \vec{e}_\pm (\ul{k}) \wedge \partial_{k_2} \vec{e}_\pm (\ul{k})) \diff k_1 \diff k_2 \,. \label{eq:ChernIntegral}
\end{equation}
In our current setup, the outcome of the integration in Eq.\,\eqref{eq:ChernIntegral} is not a topological invariant, and thus not necessarily an integer. We nonetheless compute it, but denote it by $ C \hbar ( \cdot ) $ to distinguish it from well-defined Chern numbers
\begin{equation}
	C \hbar (\cal{E}_\pm) = \pm \frac{1}{2} = \frac{1}{2} \op{sgn} (m_\pm) \,. \label{eq:FakeChernPlusMinus}
\end{equation}
Evocatively,
\begin{equation}
	C \hbar (\cal{E}_+) - C \hbar (\cal{E}_-) = +1 \label{eq:ChernDifference}
\end{equation}
is a non-zero integer. This is the intuition appearing e.g.\,in \cite{Witten16,HK10}. We make the stronger statement that such a difference is a well-defined topological index, and more specifically the bulk invariant of the following Hamiltonian
\begin{equation}
	H \coloneqq (p_1, p_2, m) \cdot \vec{\sigma} \,, \label{eq:H}
\end{equation}
where $ m $ is a function of the position operator in direction 2 with profile $ m(x_2) $ satisfying:
\begin{enumerate}
	\item $ m(x_2) $ differentiable with continuous derivative;
	
	\item $ m(x_2) $ monotonous for all $x_2$;
	
	\item $ m(x_2) \to m_\pm $ as $ x_2 \to \pm \infty $.
\end{enumerate}
Without loss of generality, we shall also assume $ m(0)=0 $. For definiteness, we pick $ m' (x_2) \geq 0 \ \forall x_2$ (monotonically increasing) here and in Section \ref{sec:Bulk}. By contrast, both $ m' (x_2) \geq 0 $ and $m' (x_2) \leq 0 $ will be considered in Sec.\,\ref{sec:edge}. In any event, flipping the sign of this derivative amounts to flipping the sign of the (to be defined) topological indices.

Notice that translation invariance in direction $ x_1 $ is not lost, and $ H $ can hence be written fiber-wise as
\begin{equation}
	H(k_1) = (k_1, - \im \partial_2, m) \cdot \vec{\sigma} = 
	\begin{pmatrix}
		m & k_1 - \partial_2 \\
		k_1 + \partial_2 & -m
	\end{pmatrix} \,. \label{eq:HFiber}
\end{equation}

The essential spectrum $ \sigma_{\mathrm{e}} (H(k_1)) $ of the fibered operator reads
\begin{equation}
	\sigma_{\mathrm{e}} (H(k_1)) = \left\{ \omega \in \bbR: | \omega | \geq \sqrt{k_1^2 + \min \{ m_-^2, m_+^2 \} } \right\} \,, \label{eq:EssSpecH}
\end{equation}
and this result follows from Thm.\,3.11 in \cite{SchroOps} and
\begin{equation}
	(H - H_\pm) T_a \overset{s}{\to} 0 \,, \qquad (a \to \pm \infty) \,, \label{eq:StrongConv1}
\end{equation}
where $ T_a \coloneqq \eul^{- \im p_2 a} $ and $ \overset{s}{\to} $ denotes strong convergence. Eq.\, \eqref{eq:StrongConv1} says that $H$ \enquote{reduces} to the Hamiltonians $ H_\pm $ above when $ x_2 \to \pm \infty $. Given this fact, \eqref{eq:EssSpecH} embodies the common wisdom that the essential spectrum is determined by \enquote{what happens very far away}.

\begin{remark}
Systems like our $H$ have been extensively studied before \cite{BHW23,QB23,Bal22,BalInterface,HK10,JackiwRebbi}. Their appeal is justified below. Intuitively, $ H_\pm $ are two insulators in the same symmetry class but in different topological phases, as suggested by Eqs.\,\eqref{eq:FakeChernPlusMinus}. The global Hamiltonian $ H $ coincides with $ H_\pm $ for $ x_2 \to \pm \infty $, and spatially interpolates between the two for finite values of $ x_2 $. At $ x_2 = 0 $, $ m(0)=0 $ and the \enquote{local} Hamiltonian is gapless. The gap closing hints at a quantum phase transition, and the system described by $H$ can thus be seen as two different topological insulators smoothly glued together along the $ x_2 = 0 $ line. The invariant associated to a system of this kind (see e.g.\,Ref.\,\cite{HK10}, discussion above Eq.\,(7)) is either the signed number of bound states propagating along the $x_2 = 0$ interface (edge), or the difference between the Chern numbers of the two insulators (bulk), just as proposed in Eq.\,\eqref{eq:ChernDifference}.
\end{remark} 

\section{Bulk index} \label{sec:Bulk}

The notion of \enquote{bulk} is intimately linked to translation invariance: An observer finds itself in the bulk of a material when he cannot perceive any edges or interfaces nearby, i.e.\,the Hamiltonian he is subject to does not change if he moves around slightly. In this sense, the system described by $H$ has two separate bulk regions $ x_2 \to \pm \infty $ where the relevant Hamiltonians are $ H_\pm $, respectively. How to combine this doubled bulk picture into a single, coherent one can be understood with a story.

Imagine position space $ \bbR^2 $ as a translucent sheet of paper. Draw on it the $x_2=0$ \textit{equatorial} line and one reference frame for each half-plane ($x_2 > 0$ and $ x_2 <0 $). Now, fold along the equator, keeping the $ x_2 > 0 $ half-plane facing upwards. Think of an observer (Alice) living on the top layer, near the $ x_2 = 0 $ edge. Have her move away from it in the positive $ x_2 $-direction, until the equator is no longer visible and she experiences constant mass $m_+$. She has reached the \enquote{upper} bulk, where physics is governed by the translation invariant Hamiltonian $H_+$. However, looking down through the translucent paper, Alice will be able to see another plane with \textit{opposite} orientation of the $x_2$-axis (due to the folding), negative mass $m_-$ and Hamiltonian $ H_- $. Her journey is graphically recounted in Fig.\,\ref{fig:Alice}. 
\begin{figure}[h]
	\centering
	\begin{adjustbox}{max size={.95\textwidth}{.8\textheight}}
		\begin{tikzpicture}
			\draw (-16,1.5) -- (-10,1.5);
			\draw (-11.5,-3) -- (-5.5,-3) node[style=midway] (a4) {};
			\draw (-16,1.5) -- (-11.5,-3) node[style=midway] (a1) {};
			\draw (-10,1.5) -- (-5.5,-3) node[style=midway] (a2) {};
			\draw[very thick] (a1.center) -- (a2.center) node[style=midway] (a3) {};
			
			\draw (-4.5,1.5) -- (1.5,1.5);
			\draw (-4.5,0) -- (-3,0);
			\draw[dashed] (-3,0) -- (1.5,0);
			\draw (-4.5,0) -- (-1.5,-3);
			\draw[dashed] (1.5,0) -- (4.5,-3);
			\draw (-4.5,1.5) -- (-1.5,-1.5);
			\draw (1.5,1.5) -- (4.5,-1.5);
			
			\draw (-1.5,-1.5) .. controls (-1,-2) and (-1,-3.5) .. (-1.5,-3) node[style=midway] (c1) {} node[pos=0.85] (t1) {};
			\draw (4.5,-1.5) .. controls (5,-2) and (5,-3.5) .. (4.5,-3) node[style=midway] (c2) {};
			\draw[very thick] (c1.center) -- (c2.center);
			
			\draw (-4.5+11.5,1.5) -- (1.5+11.5,1.5);
			\draw (-4.5+11.5,0) -- (-3+11.5,0);
			\draw[dashed] (-3+11.5,0) -- (1.5+11.5,0);
			\draw (-4.5+11.5,0) -- (-1.5+11.5,-3);
			\draw[dashed] (1.5+11.5,0) -- (4.5+11.5,-3);
			\draw (-4.5+11.5,1.5) -- (-1.5+11.5,-1.5);
			\draw (1.5+11.5,1.5) -- (4.5+11.5,-1.5);
			
			\draw (-1.5+11.5,-1.5) .. controls (-1+11.5,-2) and (-1+11.5,-3.5) .. (-1.5+11.5,-3) node[style=midway] (c3) {};
			\draw (4.5+11.5,-1.5) .. controls (5+11.5,-2) and (5+11.5,-3.5) .. (4.5+11.5,-3) node[style=midway] (c4) {};
			\draw[very thick] (c3.center) -- (c4.center);
			
			\draw[->,color=red,thick] (a3.center)++(-1,+1) -- +(1,0); 
			\draw[->,color=red,thick] (a3.center)++(-1,+1) -- +(135:1); 
			\draw[->,color=blue,thick] (a4.center)++(-1,+1) -- +(1,0); 
			\draw[->,color=blue,thick] (a4.center)++(-1,+1) -- +(135:1);
			
			\draw[->,color=red,thick] (-0.5,-0.5) -- +(1,0); 
			\draw[->,color=red,thick] (-0.5,-0.5) -- +(135:1); 
			\draw[->,color=blue,thick,dashed] (-0.5,-1) -- +(1,0); 
			\draw[->,color=blue,thick,dashed] (-0.5,-1) -- +(315:1); 
			
			\draw[->,color=red,thick] (-0.5,-0.5)++(11.5,0) -- +(1,0); 
			\draw[->,color=red,thick] (-0.5,-0.5)++(11.5,0) -- +(135:1); 
			\draw[->,color=blue,thick,dashed] (-0.5,-1)++(11.5,0) -- +(1,0); 
			\draw[->,color=blue,thick,dashed] (-0.5,-1)++(11.5,0) -- +(315:1);
			
			\draw (t1.center) -- +(6,0);
			\draw (t1.center)++(11.5,0) -- +(6,0);
			
			\draw[thick] (10,3.5) -- +(290:1.5);
			\draw[thick] (10,3.5) -- +(250:1.5) node[pos=0.66] (n1) {};
			\draw[thick] (n1.center) arc (250:290:1); 
			\draw[thick,->] (10,2.4) -- (10,1.2);
			
			\node[draw,single arrow, minimum height=20mm, minimum width=8mm, anchor=west] at (-7,0) {};
			\node[draw,single arrow, minimum height=20mm, minimum width=8mm, anchor=west] at (4.5,0) {};
			
			\node[alice, minimum size=1cm] at (3.5,-1.5) {};
			\node[alice, minimum size=1cm] at (12.5,1.5) {};
			
			\node[color=red] (l11) at (-13.7,1) {\large $ m_+ >0 $};
			\node[color=blue] (l12) at (-10.5,-2.5) {\large $ m_- <0 $};
			\node[right] (l13) at (-7.65,-0.75) {\large $m=0$};
			
			\node[color=red] (l21) at (-2,0.5) {\large $ m_+ >0 $};
			\node[color=blue] (l22) at (0,-2.1) {\large $ m_- <0 $};
			\node[right] (l23) at (4.8,-2.5) {\large $m=0$};
			
			\node[color=red] (l31) at (9.5,0.5) {\large $ m_+ >0 $};
			\node[color=blue] (l32) at (11.5,-2.1) {\large $ m_- <0 $};
			\node[above] (l33) at (15.3,-2.7) {\large $m=0$};
		\end{tikzpicture}
	\end{adjustbox}
	\caption{From left to right: position space before folding; position space after folding and Alice standing close to the equator; Alice after moving towards the \enquote{upper bulk}, now looking down at the lower rim.}
	\label{fig:Alice}
\end{figure}
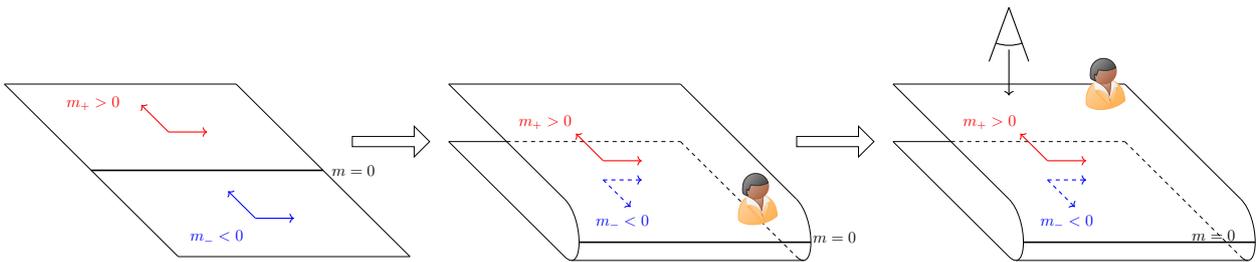

Aided by this intuition, one can formalize the bulk picture of system $H$. Let $ \bb{E} $ be the affine plane and $ \bb{E}_\pm $ two oppositely oriented copies of it, with natural identification $ i: \bb{E}_+ \to \bb{E}_- $ through $ \bb{E}_+ \equiv \bb{E} \equiv \bb{E}_- $. Let $ \bbR^2 $ be equipped with the canonical orientation. Let $ \varphi_\pm : \bb{E}_\pm \to \bbR^2 $ be two orientation preserving charts (linear w.l.o.g.). Then $ i: \bbR^2 \to \bbR^2 $ acts between charts with $ \det i = -1 $, for example as
\begin{equation}
	i: (x_1, x_2) \mapsto (x_1,-x_2) \,. \label{eq:InversionSpace}
\end{equation}
Differently put, $ i \circ \varphi_\pm = \varphi_\mp $.

Orientability of a manifold is one of the prerequisites for carrying a spin structure and hence a Dirac operator. $ \bb{E}_\pm $ can thus host the Dirac Hamiltonian $ H_\pm $ with mass $ m_\pm $ respectively. Moving to momentum space, let $ \bb{E}^* $ be the dual of $ \bb{E} $, and let the Brillouin zones $ \bb{E}_\pm^* $ inherit orientation from $ \bb{E}_{\pm} $. We wish to express $ H_\pm $ in the coordinates $ \ul{k} $ of $ p \in \bb{E}^* $ induced by $ \varphi_\pm $. Here and in the following, let $ \varphi_+ (p) = (k_1,k_2)$. By Eq.\,\eqref{eq:InversionSpace}
\begin{equation}
	i: (k_1, k_2) \mapsto (k_1,-k_2) \,. \label{eq:InversionMomentum}
\end{equation}
When embedded in $H$ as asymptotic Hamiltonians, the action of $ H_{\pm} $ on a state of momentum $p \in \bb{E}^* $ can only differ by the mass $ m_\pm $. In other words, if
\begin{equation}
	H_+ (p) \coloneqq H_+ (\varphi_+ (p)) = (k_1,k_2,m_+) \cdot \vec{\sigma} \,, \label{eq:HPlusP}
\end{equation}
then
\begin{equation}
	H_- (p) \coloneqq H_- (\varphi_- (p)) = (k_1,k_2,m_-) \cdot \vec{\sigma} \,. \label{eq:HMinusP}
\end{equation}
By Eq.\,\eqref{eq:InversionMomentum}, this in turn means
\begin{equation}
	H_{\pm} (\ul{k}) = (k_1, \pm k_2, m_\pm) \cdot \vec{\sigma}
\end{equation}
as maps $ \bbR^2 \to L(\cal{H}) $, where $ L(\cal{H}) $ denotes linear operators on the Hilbert space.

The bundles $ \cal{E}_\pm $ are completely determined by the projections $ P_\pm $, which read
\begin{equation}
	P_\pm (p) \coloneqq \frac{ \id + (\vec{e}_\pm (p) \cdot \vec{\sigma}) }{2} \,,
\end{equation}
where
\begin{equation}
	\vec{e}_\pm (p) = \frac{1}{\omega_\pm (p)} (k \cos \theta, k \sin \theta, m_\pm) \,,
\end{equation}
by Eqs.\,(\ref{eq:HPlusP},\ref{eq:HMinusP}) and having switched to polar coordinates $ \varphi_+ (p) = (k \cos \theta, k \sin \theta) $. We notice that
\begin{equation}
	\lim_{k \to \infty} \vec{e}_+ (p) = \lim_{k \to \infty} \vec{e}_- (p) \ \Longrightarrow \ \lim_{k \to \infty} P_+ (p) = \lim_{k \to \infty} P_- (p) \,, \label{eq:PEquality}
\end{equation}
and such limits are finite, albeit direction dependent. Given existence of these limits, just as the real line can be compactified to a closed interval, so can the planes $ \bb{E}^*, \bb{E}_\pm^* $ be compactified to closed disks $ \bb{D}, \bb{D}_\pm $, where points at their boundary represent infinity in the corresponding direction. In turn, the disk is topologically equivalent to a hemisphere.
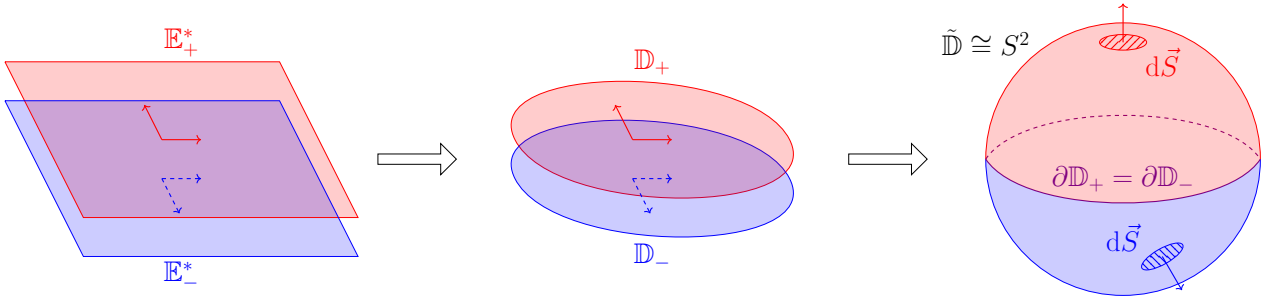
\begin{figure}[h]
	\centering
	\begin{adjustbox}{max size={.95\textwidth}{.8\textheight}}
		\begin{tikzpicture}
			\fill[red,opacity=0.2] (-16.5,2.5) -- (-9.5,2.5) -- (-7.5,-1.5) -- (-14.5,-1.5);
			\fill[blue,opacity=0.2] (-16.5,2.5-1) -- (-9.5,2.5-1) -- (-7.5,-1.5-1) -- (-14.5,-1.5-1);
			\draw[red] (-16.5,2.5) -- (-9.5,2.5) -- (-7.5,-1.5) -- (-14.5,-1.5) -- (-16.5,2.5);
			\draw[blue] (-16.5,2.5-1) -- (-9.5,2.5-1) -- (-7.5,-1.5-1) -- (-14.5,-1.5-1) -- (-16.5,2.5-1);
			
			\draw[->,color=red,thick] (-12.5,0.5) -- +(1,0); 
			\draw[->,color=red,thick] (-12.5,0.5) -- +(-0.447,0.894); 
			\draw[->,color=blue,thick,dashed] (-12.5,-0.5) -- +(1,0); 
			\draw[->,color=blue,thick,dashed] (-12.5,-0.5) -- +(0.447,-0.894);
			
			\node[color=red] (l11) at (-12,3) {\huge $ \bb{E}^*_+ $};
			\node[color=blue] (l12) at (-12,-3) {\huge $ \bb{E}^*_- $};
			
			\draw[red] (-3.5,0.5) .. controls (-2.5,-1.5) and (4.5,-1.5) .. (3.5,0.5) .. controls (2.5,2.5) and (-4.5,2.5) .. (-3.5,0.5); 
			\draw[blue] (-3.5,0.5-1) .. controls (-2.5,-1.5-1) and (4.5,-1.5-1) .. (3.5,0.5-1) .. controls (2.5,2.5-1) and (-4.5,2.5-1) .. (-3.5,0.5-1); 
			\fill[red,opacity=0.2] (-3.5,0.5) .. controls (-2.5,-1.5) and (4.5,-1.5) .. (3.5,0.5) .. controls (2.5,2.5) and (-4.5,2.5) .. (-3.5,0.5); 
			\fill[blue,opacity=0.2] (-3.5,0.5-1) .. controls (-2.5,-1.5-1) and (4.5,-1.5-1) .. (3.5,0.5-1) .. controls (2.5,2.5-1) and (-4.5,2.5-1) .. (-3.5,0.5-1); 
			
			\draw[->,color=red,thick] (-12.5+12,0.5) -- +(1,0); 
			\draw[->,color=red,thick] (-12.5+12,0.5) -- +(-0.447,0.894); 
			\draw[->,color=blue,thick,dashed] (-12.5+12,-0.5) -- +(1,0); 
			\draw[->,color=blue,thick,dashed] (-12.5+12,-0.5) -- +(0.447,-0.894);
			
			\node[color=red] (l21) at (0,2.5) {\huge $ \bb{D}_+ $};
			\node[color=blue] (l22) at (0,-2.5) {\huge $ \bb{D}_- $};
			
			\draw[red] (15.5,0) arc (0:180:3.5); 
			\draw[blue] (8.5,0) arc (180:360:3.5); 
			\draw[violet,thick] (8.5,0) .. controls (9.5,-1.5) and (14.5,-1.5) .. (15.5,0); 
			\draw[violet,dashed,thick] (8.5,0) .. controls (9.5,1.5) and (14.5,1.5) .. (15.5,0); 
			\fill[red,opacity=0.2] (15.5,0) arc (0:180:3.5) .. controls (9.5,-1.5) and (14.5,-1.5) .. (15.5,0); 
			\fill[blue,opacity=0.2] (8.5,0) arc (180:360:3.5) .. controls (14.5,-1.5) and (9.5,-1.5) .. (8.5,0); 
			
			\draw[red] (12,3) ellipse (0.6 and 0.2); 
			\draw[blue,rotate around={30:(13,-2.5)}] (13,-2.5) ellipse (0.6 and 0.2); 
			\draw[red,thick,->] (12,3)--(12,4);
			\draw[blue,thick,->] (13,-2.5) -- +(-60:1);
			\fill[pattern=north east lines, pattern color=red] (12,3) ellipse (0.6 and 0.2); 
			\fill[rotate around={30:(13,-2.5)}, pattern=north west lines, pattern color=blue] (13,-2.5) ellipse (0.6 and 0.2); 
			
			\node[color=black] (l31) at (8.5,3) {\huge $ \tilde{\bb{D}} \cong S^2 $};
			\node[color=red] (l32) at (13,2.5) {\huge $ \diff \vec{S} $};
			\node[color=blue] (l33) at (12,-2) {\huge $ \diff \vec{S} $};
			\node[color=violet] (l34) at (12,-0.5) {\huge $ \partial \bb{D}_+ = \partial \bb{D}_- $};
			
			\node[draw,single arrow, minimum height=20mm, minimum width=8mm, anchor=west] at (-7,0) {};
			\node[draw,single arrow, minimum height=20mm, minimum width=8mm, anchor=west] at (5,0) {};
		\end{tikzpicture}
	\end{adjustbox}
	\caption{Visualization of the compactification procedure. Left panel: oppositely oriented planar Brillouin zones $ \bb{E}^*_\pm $. Center panel: disks $\bb{D}_\pm$. Right panel: $ \bb{D}_\pm $ joined along the boundary $ \partial \bb{D}_+ = \partial \bb{D}_- $ to form an oriented sphere $ \tilde{\bb{D}} \cong S^2 $.}
	\label{fig:BulkIndex}
\end{figure}

Using the new Brillouin zones $ \bb{D}_\pm $, Eq.\,\eqref{eq:PEquality} is rewritten as
\begin{equation}
	P_+ (p) = P_- (p) \,, \qquad (p \in \partial \bb{D}) \,, \label{eq:DiskEdge}
\end{equation}
by $ \partial \bb{D}_+ = \partial \bb{D}_- $.

Note that 
\begin{equation}
	\tilde{\bb{D}} \coloneqq \bb{D}_+ \cup \bb{D}_-
\end{equation}
joined along $ \partial \bb{D}_+ = \partial \bb{D}_- $ is a sphere equipped with an orientation (see Figure \ref{fig:BulkIndex}), and in fact consistent on $ \bb{D}_\pm $.

By Eq.\,\eqref{eq:DiskEdge},
\begin{equation}
	P = P_+ \sqcup P_- \label{eq:GlobalP}
\end{equation}
defines a line bundle $ \cal{E} $ on $ \tilde{\bb{D}} $. We have thus been able to construct a bundle $ \cal{E} $ over compact base space $\tilde{\bb{D}} \simeq S^2$ associated with the global Hamiltonian $H$.
\begin{definition}
	The bulk index $ \cal{I} $ of the Hamiltonian $H$ defined in Eq.\,\eqref{eq:H} is
	\begin{equation}
		\cal{I} \coloneqq Ch (\cal{E}) \,.
	\end{equation}
\end{definition}
This index can now be computed. Formally
\begin{equation}
	Ch (\cal{E}) = \frac{1}{2 \pi} \int_{\tilde{\bb{D}} } \vec{e} (q) \cdot (\partial_{q_1} \vec{e} (q) \wedge \partial_{q_2} \vec{e}(q)) \diff q_1 \diff q_2 \,,
\end{equation}
where $ \vec{e} $ is the unit vector associated to $P$ and $ q = (q_1,q_2) $ some coordinates on the sphere. Operationally, it is natural to place a chart on each hemisphere and evaluate the integral accordingly:
\begin{align}
	Ch (\cal{E}) &= \frac{1}{2 \pi} \int_{R^2} \vec{e}_+ (k_1, k_2) \cdot (\partial_{k_1} \vec{e}_+ (k_1, k_2) \wedge \partial_{k_2} \vec{e}_+ (k_1,k_2)) \diff k_1 \diff k_2 \nonumber \\
	&- \frac{1}{2 \pi} \int_{R^2} \vec{e}_- (k_1, k_2) \cdot (\partial_{k_1} \vec{e}_- (k_1, k_2) \wedge \partial_{k_2} \vec{e}_- (k_1,k_2)) \diff k_1 \diff k_2 \nonumber \\
	&= C \hbar (\cal{E}_+) - C \hbar (\cal{E}_-) \,, \label{eq:TotChern}
\end{align}
where the second integral appears with a minus sign due to the negative orientation of $ \bb{E}^*_- $.

We have thus proven the following proposition.
\begin{proposition}
	The quantity $ C \hbar (\cal{E}_+) - C \hbar (\cal{E}_-) $ is a genuine topological invariant, namely there exists a bundle $ \cal{E} $ over compact base space $ \tilde{\bb{D}} \simeq S^2 $ such that
	\begin{equation}
		Ch (\cal{E}) = C \hbar (\cal{E}_+) - C \hbar (\cal{E}_-) \,. \label{eq:BulkInv}
	\end{equation}
\end{proposition}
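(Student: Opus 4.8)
The plan is to capitalize on the single obstruction diagnosed in Section~\ref{sec:Setup} --- the non-compactness of the Brillouin zone, which invalidated Gauss--Bonnet and hence robbed the integral \eqref{eq:ChernIntegral} of any integrality --- and to eliminate it by making the radial compactification and the equatorial gluing sketched above fully rigorous. Concretely, I would proceed in three stages: first, promote the direction-dependent limits \eqref{eq:PEquality} to an honest continuous line bundle over a genuinely compact base; second, invoke compactness to conclude that $Ch(\cal{E})$ is an integer-valued invariant; and third, evaluate it by cutting along the equator so as to recover the right-hand side of \eqref{eq:BulkInv}.

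For the first stage I would pin down the compactification. Since $\vec{e}_\pm(p)\to(\cos\theta,\sin\theta,0)$ as $k\to\infty$ along the ray of angle $\theta$, each $\vec{e}_\pm$ extends continuously from $\bb{E}^*_\pm$ to the closed disk $\bb{D}_\pm$ obtained by adjoining one boundary point per direction, and the projector $P_\pm$ inherits a continuous extension to $\bb{D}_\pm$. Equation \eqref{eq:DiskEdge} says exactly that these two extensions agree on the common circle $\partial\bb{D}_+=\partial\bb{D}_-$, so the set-theoretic gluing produces the closed surface $\tilde{\bb{D}}\cong S^2$. Here I would verify that the orientation reversal built into $\varphi_-$ (equivalently $\det i=-1$ in \eqref{eq:InversionSpace}) is precisely what makes the two boundary orientations match across the equator, yielding a coherently oriented sphere as in Figure~\ref{fig:BulkIndex}. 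The matched family \eqref{eq:GlobalP} is then a continuous field of rank-one projections over a compact, paracompact base, and by the standard correspondence this defines a complex line bundle $\cal{E}\to\tilde{\bb{D}}$.

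With the base now compact, $Ch(\cal{E})\in\bbZ$ by Chern--Weil theory, i.e.\ by the very Gauss--Bonnet theorem whose hypotheses failed on the non-compact $\bbR^2$. This is the conceptual crux of the statement: the individual numbers $C\hbar(\cal{E}_\pm)=\pm\tfrac12$ of \eqref{eq:FakeChernPlusMinus} need not be integers and carry no invariant meaning in isolation, yet the glued bundle $\cal{E}$ genuinely does, so the difference \eqref{eq:ChernDifference} inherits its topological robustness from $\cal{E}$. The final computation then places one chart on each hemisphere and splits the curvature integral as in \eqref{eq:TotChern}: on the upper disk $\vec{e}\equiv\vec{e}_+$, reproducing the integral defining $C\hbar(\cal{E}_+)$, while on the lower disk $\vec{e}\equiv\vec{e}_-$ but with the reversed orientation of $\bb{E}^*_-$ flipping the sign of its contribution to $-C\hbar(\cal{E}_-)$; summing gives \eqref{eq:BulkInv}.

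I expect the main obstacle to be the bookkeeping at the equator. One must check not merely that the boundary values of the projectors coincide pointwise, but that the glued $P$ is regular enough on $\tilde{\bb{D}}$ for the curvature-integral formula to apply --- or, failing smoothness across the seam, to appeal to the homotopy invariance of $Ch$ and smooth $P$ within its isomorphism class without altering the integer. This must be paired with a careful audit of orientations, so that the minus sign in \eqref{eq:TotChern} is the one genuinely forced by the gluing of $\bb{E}^*_-$ with reversed orientation rather than one inserted by hand.
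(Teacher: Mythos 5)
Your proposal follows the paper's own argument essentially step for step: continuous extension of $\vec{e}_\pm$ and $P_\pm$ to the disks $\bb{D}_\pm$, gluing along $\partial\bb{D}_+=\partial\bb{D}_-$ via Eq.\,\eqref{eq:DiskEdge} into the oriented sphere $\tilde{\bb{D}}$, and evaluation of $Ch(\cal{E})$ hemisphere by hemisphere with the orientation-induced minus sign as in Eq.\,\eqref{eq:TotChern}. The additional care you flag about regularity across the seam and the orientation audit is a sensible refinement but does not change the route.
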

\begin{remark}
	We once again stress that Eq.\,\eqref{eq:BulkInv} has already been rigorously proven in \cite{BalInterface}. The derivation above is nonetheless alternative, in that it requires little to no familiarity with topological invariants, and leverages a \enquote{pictorial} intuition that was not made explicit before.
\end{remark}
\begin{remark} \label{rem:Degree}
	Here is another equivalent way of tackling non-compact Brillouin zones. Instead of integrating the Berry curvature on the full Brillouin zone (thereby obtaining the Chern number), one could integrate the Berry connection along some closed contour. The choice of contour is arbitrary, unless its \enquote{radius} is sent to infinity. Equivalence with our approach is now apparent: by Stokes' Theorem, the line integral of the connection is equal to the surface integral of the curvature in the interior of the curve, which again consists in the entire Brillouin zone.
	
    In our setup, the stated equivalence can be formally phrased as
	\begin{equation}
		C \hbar (\cal{E}_\pm) = \frac{1}{2} \op{sgn} (m_\pm) = \frac{1}{2} \op{deg} (\vec{e}_\pm^{\,\infty} ) \,, 
	\end{equation}
	where 
	\begin{equation}
		\vec{e}_\pm^{\, \infty} (\theta) \coloneqq \lim_{k \to \infty} \vec{e}_\pm (k, \theta) \label{eq:ConnectionCurvature}
	\end{equation}
    shall be seen as maps $ \vec{e}_\pm^{\, \infty}: S^1 \ni \theta \to \partial \mathbb{D}_+ \equiv \partial \mathbb{D}_- $ from a positively oriented circle to the equator of $ \tilde{\mathbb{D}} $. Clearly, $ \op{deg} (\vec{e}_+^{\, \infty} ) = - \op{deg} (\vec{e}_-^{\, \infty} ) $ because the two have to sweep the equator in opposite directions to keep the interior of $ \mathbb{D}_+ $ or $ \mathbb{D}_- $ to their left, respectively. Given our sign conventions,
	\begin{equation}
		\op{deg} (\vec{e}_\pm^{\, \infty} ) = \op{sgn} (m_\pm) \,,
	\end{equation}
	whence Eq.\,\eqref{eq:ConnectionCurvature}.
	
    Even this alternative method demands gluing two Hamiltonians with opposite mass to obtain an integer bulk index.
\end{remark}

\section{Edge index and bulk-edge correspondence} \label{sec:edge}

Ever since \cite{Hatsugai93PRL,Hatsugai93PRB}, many topological insulators have been shown to exhibit bulk-edge correspondence. The latter means that the edge index, typically the signed number of chiral states propagating along the boundary, coincides with the bulk invariant. Our model Hamiltonian $H$ enjoys the same property. We prove this claim by defining an edge index $ \cal{I}^\# $, computing its value and verifying it matches $ \cal{I} $. The index $ \cal{I}^\# $ is moreover proven independent on the choice of mass profile $ m(x_2) $, as expected of a topological quantity.

The following definition of the edge index is a modification of the fiducial line approach (see Ref.\,\cite{GP13} and references therein), more suited for unbounded operators.
\begin{definition}
	Consider $H$ as in Eq.\,\eqref{eq:H}, and let $ \tilde{m} \coloneqq \min \{ |m_-|, |m_+| \} $. Introduce a Fermi line $ \mu_\epsilon (k_1) $, approximating the lower rim of the upper band from below:
	\begin{equation}
		\mu_\epsilon (k_1) \coloneqq -\epsilon + \sqrt{k_1^2 + \tilde{m}^2} \,,
	\end{equation}
	where $ \epsilon > 0 $ a (small) constant, to be chosen so that $ \mu_\epsilon $ lies entirely in the bulk gap. Denote by $ \psi_j (\ul{x}) \coloneqq \eul^{\im k_1 x_1} \hat{\psi}_j (x_2;k_1) $ the bound eigenstate with dispersion relation $ \omega_j (k_1) $,
	\begin{equation}
		H(k_1) \hat{\psi}_j (x_2;k_1) = \omega_j (k_1) \hat{\psi}_j (x_2;k_1) \,. \label{eq:TISEk1}
	\end{equation}
	Then, the edge index $ \cal{I}^\# $ is 
	\begin{equation}
		\cal{I}^\# \coloneqq - \sum_{j \in J} I (\mu_\epsilon, \omega_j) \,, \label{eq:EdgeIndex}
	\end{equation}
	with $ I (\mu_\epsilon, \omega_j) $ the intersection number of the Fermi line with the $ j $-th edge channel $ \omega_j $, and $J$ the index set of bound eigenstates. \label{def:EdgeIndex}
\end{definition}
\begin{remark}
	A word on conventions (see also Fig.\,\ref{fig:EdgeIndex}). Both $ \mu_\epsilon (k_1) $ and $ \omega_j (k_1), \, (j \in J) $ are curves on the $ (k_1,\omega) $-plane. The intersection number between the $ k_1 $ and $ \omega $ axes is taken to be $ +1 $. The minus sign in Eq.\,\eqref{eq:EdgeIndex} is then necessary to recover the standard condensed matter convention \cite{GP13,Hatsugai93PRB,Hatsugai93PRL} that states \textit{born} on the \textit{lower rim} of a bulk band (while proceeding in the positive $ k_1 $ direction) contribute $ +1 $ to the edge index. Our choice is also equivalent to counting left-movers (right-movers) positively (negatively).
\end{remark}

The edge index is computed by finding the discrete spectrum and the relative eigenstates. Systems like our $H$ have been extensively studied elsewhere, see Refs.\,\cite{HK10,JackiwRebbi}. Two eigenstates $ \psi^{R/L} (\ul{x}; k_1) $ (right- and left-moving respectively) are known for any profile of the mass. They read
\begin{equation}
	\psi^{R/L} (\ul{x}; k_1) = \eul^{\im k_1 x_1} \hat{\psi}^{R/L} (x_2;k_1) \,, \qquad  \hat{\psi}^{R/L} (x_2;k_1) = N_{R/L} \eul^{\pm \int_0^{x_2} m(x_2') \diff x_2' } \begin{pmatrix} 1 \\ \pm 1 \end{pmatrix} \,, \label{eq:BoundStates}
\end{equation}
with $N_{R/L} \in \bbC$ some normalization constants. Their dispersion relation is $ \omega_{R/L} (k_1) = \pm k_1 $, and they are bound for $ m(x_2) $ monotonically decreasing or increasing, respectively. Since
\begin{equation}
	I (\mu_\epsilon, \omega_{R/L}) = \pm 1 \,,
\end{equation}
their contribution to the edge index would be $ \mp 1 $, respectively. We claim that these are the only net contributions, namely all other bound states (if they exist) intersect the Fermi line an even number of times with opposite signs. Equivalently, they emerge from and disappear into the same bulk band.
\begin{proposition}
    Let $ \cal{I}^\# $ as in Def.\,\ref{def:EdgeIndex}, and let $ m(x_2) $ be monotonous: $ m'(x_2) \geq 0 $ or $ m'(x_2) \leq 0 $ for all $x_2$. Then
	\begin{equation}
		\cal{I}^\# = \op{sgn} (m') \,.
	\end{equation}
	The unique bound state giving a net non-zero contribution to $ \cal{I}^\# $ for $ m' \geq 0 $ ($ m' \leq 0 $) is $ \psi^{L} $ ($ \psi^R $), cf.\,Eq.\,\eqref{eq:BoundStates}. Its dispersion relation reads $ \omega_L = -k_1 $ ($\omega_R = +k_1$). \label{prop:EdgeIndex}
\end{proposition}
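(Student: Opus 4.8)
The plan is to exploit the particle-hole structure of $H(k_1)$ together with a supersymmetric factorization of its square, which reduces the spectral problem to a pair of one-dimensional Schr\"odinger operators whose eigenvalues are manifestly independent of $k_1$. Writing $H(k_1) = k_1 \sigma_1 - \im \partial_2 \sigma_2 + m \sigma_3$ and using $m = m(x_2)$, a direct computation gives
\begin{equation}
	H(k_1)^2 = (k_1^2 + m^2 - \partial_2^2)\,\id + m'(x_2)\,\sigma_1 \,.
\end{equation}
Diagonalizing $\sigma_1$ (via the Hadamard rotation) turns this into the block-diagonal form $\op{diag}(AA^\dagger, A^\dagger A) + k_1^2$, with $A \coloneqq \partial_2 + m$ and $A^\dagger = -\partial_2 + m$, so that $A^\dagger A = -\partial_2^2 + m^2 - m'$ and $AA^\dagger = -\partial_2^2 + m^2 + m'$ are standard supersymmetric partner Hamiltonians. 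In the same basis $H(k_1)$ itself becomes $\left( \begin{smallmatrix} k_1 & A \\ A^\dagger & -k_1 \end{smallmatrix} \right)$.

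Next I would read off the \emph{full} discrete spectrum from supersymmetric quantum mechanics, thereby going beyond the explicit $\psi^{R/L}$. The partners $A^\dagger A$ and $AA^\dagger$ are isospectral on $(0,\infty)$, and since $m' \to 0$ and $m \to m_\pm$ at infinity their continuous spectrum starts at $\tilde m^2$; hence only finitely many bound eigenvalues $0 < E_n < \tilde m^2$ occur, and these account for all bound states of $H(k_1)$. For each such $E_n$ the $2\times 2$ problem $\left( \begin{smallmatrix} k_1 & \sqrt{E_n} \\ \sqrt{E_n} & -k_1 \end{smallmatrix} \right)$ coupling the two partner eigenstates yields a particle-hole pair of bound bands $\omega_n^\pm(k_1) = \pm\sqrt{k_1^2 + E_n}$. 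Crucially, because $k_1$ enters $H(k_1)^2$ only through the additive constant $k_1^2$, each $E_n$ is independent of $k_1$ and both branches $\omega_n^\pm$ are \emph{even} in $k_1$. The zero mode is the exception: exactly one of $Au = 0$, $A^\dagger v = 0$ has a normalizable solution — namely $u = \eul^{-\int_0^{x_2} m}$ when $m' \geq 0$ (resp. $v = \eul^{+\int_0^{x_2} m}$ when $m' \leq 0$), by the sign pattern $m_- < 0 < m_+$ (resp. $m_+ < 0 < m_-$). This single \emph{unpaired} mode is precisely $\psi^L$ (resp. $\psi^R$) of Eq.~\eqref{eq:BoundStates}, with \emph{linear} dispersion $\omega = -k_1$ (resp. $\omega = +k_1$).

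Finally I would evaluate the intersection numbers. The negative branches $\omega_n^- < 0$ never meet $\mu_\epsilon$, which lies in $\omega > 0$ once $\epsilon < \tilde m$. Each positive branch $\omega_n^+$ is even, as is $\mu_\epsilon$, so its crossings of $\mu_\epsilon$ come in mirror pairs at $\pm k_1$; the reflection $k_1 \mapsto -k_1$ reverses the reading direction of the $k_1$-axis and hence the sign of each crossing, so mirror partners cancel and $I(\mu_\epsilon, \omega_n^+) = 0$ (choosing $\epsilon$ generic so that no crossing sits at $k_1 = 0$). The only net contribution is thus the unpaired linear mode, which, being monotone, crosses $\mu_\epsilon$ exactly once, giving $-I(\mu_\epsilon, \omega_L) = +1$ (resp. $-I(\mu_\epsilon, \omega_R) = -1$). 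Therefore $\cal{I}^\# = \op{sgn}(m')$.

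The step I expect to be the main obstacle is the second one: one must justify rigorously that the squared operator captures \emph{all} bound states of $H(k_1)$, that for each positive $E_n$ both particle-hole partners are genuinely realized as $L^2$ eigenstates, and that the zero mode is genuinely unpaired (only one of $u,v$ being normalizable). Once the evenness of the massive bands and the unpaired, linear character of the zero mode are secured, the intersection count is elementary and the symmetry cancellation does the rest.
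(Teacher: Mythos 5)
Your proposal is correct and is essentially the paper's own argument in different clothing: the Hadamard rotation that block-diagonalizes $H(k_1)^2$ is exactly the paper's substitution $s=(u+v)/2$, $d=(u-v)/2$, and your supersymmetric partners $A^\dagger A$, $AA^\dagger$ are precisely its effective Schr\"odinger potentials $W_{s/d}=k_1^2+m^2\pm m'$, with the same zero-mode analysis singling out $\psi^{L/R}$ and the dispersion $\omega=\mp k_1$. The only divergence is the final counting step: the paper bounds all bound-state energies by $k_1^2\le\omega^2<k_1^2+\tilde m^2$ and argues that any gap-crossing channel must pass through $\omega=k_1=0$, whereas you use the $k_1$-independence of the partner spectra to write every massive bound band explicitly as $\pm\sqrt{k_1^2+E_n}$ and cancel its intersections with $\mu_\epsilon$ by evenness in $k_1$; this is a slightly sharper but equivalent way of showing that only the unpaired zero mode contributes, and the step you flag as delicate (that every $L^2$ eigenstate of $H(k_1)$ descends to $L^2$ eigenfunctions of the partners) is handled in the paper simply by passing to the second-order system and invoking standard bounds on Schr\"odinger bound-state energies.
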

The full proof is reported in App.\,\ref{app:ProofEdgeIndex}, but its main elements can be sketched here. Let $ \psi = (u,v)^T $ be a candidate solution of the eigenvalue equation \eqref{eq:TISEk1}. The latter consists in two coupled first order ODEs. Decoupling them results in second order ODEs for $u$ and $v$. These are formally equivalent to time-independent Schr\"odinger equations, with an effective potential dependent on $ m $ and $ m' $. The bound state energies must lie above the infimum of this potential and below its asymptotic value. Such a condition confines possible edge channels to the region of the $ (k_1,\omega) $ plane between the \enquote{light cone} $ | \omega | = | k_1 | $ and $ \sigma_{\mathrm{e}} (H) $, see Figure \ref{fig:EdgeIndex}. If a channel $ \omega_i (k_1) $ connects the bands, thus giving a non-zero contribution to $ \cal{I}^\# $, it must cross the $ \omega = k_1 = 0 $ point. However, only two states solve $ H(0) \hat{\psi} (x_2,0) = 0 $, and they are the ones reported in Eq.\,\eqref{eq:BoundStates}. All other edge modes, if they exist, thus emerge from and disappear into the same band. Their net contribution to $ \cal{I}^\# $ is zero, whence the claim.   

For $ m' \geq 0 $, as originally assumed in Section \ref{sec:Setup}, Prop.\,\ref{prop:EdgeIndex} implies bulk-edge correspondence in the form of
\begin{equation}
	\cal{I} = +1 = \cal{I}^\# \,.
\end{equation}
\begin{remark}
	Our edge index $ \cal{I}^{\#} $ differs from $ 2 \pi \sigma_I $, the transport index defined in Eq. (4) of \cite{BalInterface}, by the choice of the Fermi line. The two would indeed coincide if we set $ \mu (k_1) = E_0 $ in Def.\,\ref{def:EdgeIndex}, for some constant $E_0$ in the global bulk gap $ [-\tilde{m},\tilde{m}] $. This difference is inconsequential for the Dirac model, but it will lead to a disagreement in the case of shallow-water waves (see Section \ref{sec:SW}).
\end{remark}
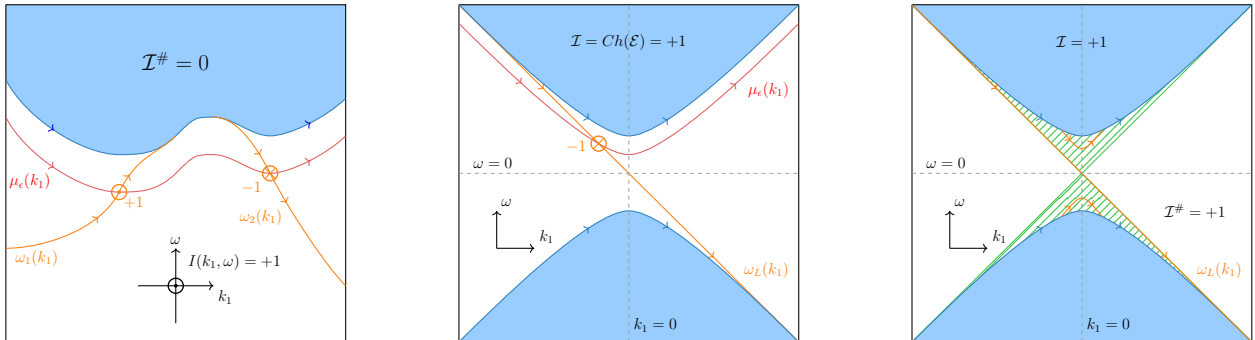
\begin{figure}[h]
	\centering
	\begin{adjustbox}{max size={.95\textwidth}{.8\textheight}}
		\pgfdeclarelayer{background}
		\pgfsetlayers{background,main}
		\begin{tikzpicture}
			\draw[thick] (-16.5,4.5) -- (-7.5,4.5) -- (-7.5,-4.5) -- (-16.5,-4.5) -- (-16.5,4.5); 
			\draw[thick] (-16.5+12,4.5) -- (-7.5+12,4.5) -- (-7.5+12,-4.5) -- (-16.5+12,-4.5) -- (-16.5+12,4.5); 
			\draw[thick] (-16.5+24,4.5) -- (-7.5+24,4.5) -- (-7.5+24,-4.5) -- (-16.5+24,-4.5) -- (-16.5+24,4.5); 
			
			\begin{scope}[decoration={markings, mark=at position 0.5 with {\arrow[blue,thick]{>}}}]
				\draw[blueBandEdge,postaction={decorate},thick] (-16.5,2.5) .. controls (-16,1.5) and (-14.5,0.5) .. (-13.5,0.5); 
				\draw[blueBandEdge,,thick] (-13.5,0.5) .. controls (-13,0.5) and (-12.5,0.5) .. (-12,1);
				\draw[blueBandEdge,thick] (-12,1) .. controls (-11.5,1.5) and (-11.5,1.5) .. (-11,1.5);
				\draw[blueBandEdge,thick] (-11,1.5) .. controls (-10.5,1.5) and (-10,1) .. (-9.5,1);
				\draw[blueBandEdge,postaction={decorate},thick] (-9.5,1) .. controls (-9,1) and (-8,1.5) .. (-7.5,2);
				
				\begin{pgfonlayer}{background}
					\fill[blueBand] (-16.5,4.5) -- (-16.5,2.5) .. controls (-16,1.5) and (-14.5,0.5) .. (-13.5,0.5) .. controls (-13,0.5) and (-12.5,0.5) .. (-12,1) .. controls (-11.5,1.5) and (-11.5,1.5) .. (-11,1.5) .. controls (-10.5,1.5) and (-10,1) .. (-9.5,1) (-9.5,1) .. controls (-9,1) and (-8,1.5) .. (-7.5,2) -- (-7.5,4.5) -- (-16.5,4.5);
				\end{pgfonlayer}
			\end{scope}
			
			\begin{scope}[decoration={markings, mark=at position 0.5 with {\arrow[FermiLine,thick]{>}}}]
				\draw[FermiLine,postaction={decorate},thick] (-16.5,2.5-1) .. controls (-16,1.5-1) and (-14.5,0.5-1) .. (-13.5,0.5-1) node[style=midway,red,anchor=north east] (n1) {\large $ \mu_\epsilon (k_1) $}; 
				\draw[FermiLine,thick] (-13.5,0.5-1) .. controls (-13,0.5-1) and (-12.5,0.5-1) .. (-12,1-1);
				\draw[FermiLine,thick] (-12,1-1) .. controls (-11.5,1.5-1) and (-11.5,1.5-1) .. (-11,1.5-1);
				\draw[FermiLine,thick] (-11,1.5-1) .. controls (-10.5,1.5-1) and (-10,1-1) .. (-9.5,1-1);
				\draw[FermiLine,postaction={decorate},thick] (-9.5,1-1) .. controls (-9,1-1) and (-8,1.5-1) .. (-7.5,2-1);
			\end{scope}
			
			\begin{scope}[decoration={markings, mark=at position 0.75 with {\arrow[EdgeState,thick]{>}}}]
				\draw[EdgeState,postaction={decorate},thick] (-16.5,-2) .. controls (-15.5,-2) and (-14,-1.5) .. (-13.5,-0.5) node[pos=0.25,orange,anchor=north] (n2) {\large $ \omega_1 (k_1) $}; 
				\draw[EdgeState,postaction={decorate},thick] (-11,1.5) .. controls (-10.5,1.5) and (-10,1) .. (-9.5,0); 
			\end{scope}
			\begin{scope}[decoration={markings, mark=at position 0.25 with {\arrow[EdgeState,thick]{>}}}]
				\draw[EdgeState,postaction={decorate},thick] (-13.5,-0.5) .. controls (-13,0.5) and (-12.5,0.5) .. (-12,1); 
				\draw[EdgeState,postaction={decorate},thick] (-9.5,0) .. controls (-9,-1) and (-8,-2.5) .. (-7.5,-3) node[pos=0.25,orange,anchor=north east] (n3) {\large $ \omega_2 (k_1) $}; 
			\end{scope}
			
			\draw[->,thick] (-10.5-1.5,-4) -- (-10.5-1.5,-2);
			\draw[->,thick] (-11.5-1.5,-3) -- (-9.5-1.5,-3);
			\node[anchor=south] (a1) at (-10.5-1.5,-2) {\large $ \omega $};
			\node[anchor=north west] (a2) at (-9.5-1.5,-3) {\large $ k_1 $};
			\node[anchor=south west] (l1) at (-10.3-1.5,-2.7) {\large $ I(k_1,\omega) = +1 $};
			\path (-12,-3) pic {vector out={line width=1pt,black/black/0.2cm}};
			
			\path (-13.5,-0.5) pic {vector out={line width=1pt,EdgeState/EdgeState/0.2cm}} (-9.5,0) pic {vector in={line width=1pt,EdgeState / line width=0.7pt,EdgeState/ 0.2cm}};
			\node[EdgeState,anchor=north west] (i1) at (-13.5,-0.5) {\large $ +1 $};
			\node[EdgeState,anchor=north east] (i2) at (-9.5-0.1,0-0.1) {\large $ -1 $};
			
			\node (l2) at (-12,3) {\Large $ \cal{I}^\# = 0 $};
			
			\begin{scope}[decoration={markings, mark=at position 0.25 with {\arrow[EdgeState,thick]{>}}, mark=at position 0.75 with {\arrow[EdgeState,thick]{>}}}]
				\draw[EdgeState,thick,postaction={decorate}] (-4.5,4.5) -- (4.5,-4.5) node[EdgeState,pos=0.825,anchor=south west] (les) {\large $ \omega_L (k_1) $};
			\end{scope}
			
			\begin{scope}[decoration={markings, mark=at position 0.4 with {\arrow[blueBandEdge,thick]{>}}, mark=at position 0.6 with {\arrow[blueBandEdge,thick]{>}}}]
				\draw[blueBandEdge,thick,postaction={decorate}] (-4.5,4.5) .. controls (-3.5,3.5) and (-1,1) .. (0,1) .. controls (1,1) and (3.5,3.5) .. (4.5,4.5); 
				
				\draw[blueBandEdge,thick,postaction={decorate}] (-4.5,-4.5) .. controls (-3.5,-3.5) and (-1,-1) .. (0,-1) .. controls (1,-1) and (3.5,-3.5) .. (4.5,-4.5); 
				
				\begin{pgfonlayer}{background}
					\fill[blueBand] (-4.5,4.5) .. controls (-3.5,3.5) and (-1,1) .. (0,1) .. controls (1,1) and (3.5,3.5) .. (4.5,4.5) -- (-4.5,4.5); 
					\fill[blueBand] (-4.5,-4.5) .. controls (-3.5,-3.5) and (-1,-1) .. (0,-1) .. controls (1,-1) and (3.5,-3.5) .. (4.5,-4.5) -- (-4.5,-4.5); 
				\end{pgfonlayer}
			\end{scope}
			
			\begin{scope}[decoration={markings, mark=at position 0.2 with {\arrow[FermiLine,thick]{>}}, mark=at position 0.8 with {\arrow[FermiLine,thick]{>}}}]
				\draw[FermiLine,thick,postaction={decorate}] (-4.5,4) .. controls (-3.5,3) and (-1,0.5) .. (0,0.5) node[pos=0.79] (ip2) {} .. controls (1,0.5) and (3.5,3) .. (4.5,4) node[pos=0.65,red,anchor=north west] (lf) {\large $ \mu_\epsilon (k_1) $}; 
			\end{scope}
			\path (ip2.center) pic {vector in={line width=1pt,EdgeState / line width=0.7pt,EdgeState/ 0.2cm}};
			\node[orange,anchor=north east] at (-1,1) {\large $-1$};
			
			\draw[nicerGray,opacity=1, thick, dashed] (-4.5,0) -- (4.5,0) node[pos=0.1,black,anchor=south] {\large $ \omega = 0 $};
			\draw[nicerGray,opacity=1, thick, dashed] (0,-4.5) -- (0,4.5) node[pos=0.05,black,anchor=west] {\large $ k_1 = 0 $};
			
			\draw[->,thick] (-3.5,-2) -- (-3.5,-1) node[pos=1,anchor=south west] {\large $ \omega $};
			\draw[->,thick] (-3.5,-2) -- (-2.5,-2) node[pos=1,anchor=south west] {\large $ k_1 $};
			
			\node at (0,3.5) {\large $ \cal{I} = Ch (\cal{E}) = +1$};
			
			\begin{scope}[decoration={markings, mark=at position 0.4 with {\arrow[blueBandEdge,thick]{>}}, mark=at position 0.6 with {\arrow[blueBandEdge,thick]{>}}}]
				\draw[blueBandEdge,thick,postaction={decorate}] (-4.5+12,4.5) .. controls (-3.5+12,3.5) and (-1+12,1) .. (0+12,1) node[pos=0.85] (es1S) {} .. controls (1+12,1) and (3.5+12,3.5) .. (4.5+12,4.5) node[pos=0.15] (es1f) {}; 
				\draw[blueBandEdge,thick,postaction={decorate}] (-4.5+12,-4.5) .. controls (-3.5+12,-3.5) and (-1+12,-1) .. (0+12,-1) node[pos=0.85] (es2S) {} .. controls (1+12,-1) and (3.5+12,-3.5) .. (4.5+12,-4.5) node[pos=0.15] (es2f) {}; 
				
				\begin{pgfonlayer}{background}
					\fill[blueBand] (-4.5+12,4.5) .. controls (-3.5+12,3.5) and (-1+12,1) .. (0+12,1) .. controls (1+12,1) and (3.5+12,3.5) .. (4.5+12,4.5) -- (-4.5+12,4.5); 
					\fill[blueBand] (-4.5+12,-4.5) .. controls (-3.5+12,-3.5) and (-1+12,-1) .. (0+12,-1) .. controls (1+12,-1) and (3.5+12,-3.5) .. (4.5+12,-4.5) -- (-4.5+12,-4.5); 
				\end{pgfonlayer}
			\end{scope}
			
			\draw[nicerGray,opacity=1, thick, dashed] (-4.5+12,0) -- (4.5+12,0) node[pos=0.1,black,anchor=south] {\large $ \omega = 0 $};
			\draw[nicerGray,opacity=1, thick, dashed] (0+12,-4.5) -- (0+12,4.5) node[pos=0.05,black,anchor=west] {\large $ k_1 = 0 $};
			
			\draw[->,thick] (-3.5+12,-2) -- (-3.5+12,-1) node[pos=1,anchor=south west] {\large $ \omega $};
			\draw[->,thick] (-3.5+12,-2) -- (-2.5+12,-2) node[pos=1,anchor=south west] {\large $ k_1 $};
			
			\begin{pgfonlayer}{background}
				\draw[niceGreen,thick] (7.5,4.5) -- (16.5,-4.5);
				\draw[niceGreen,thick] (7.5,-4.5) -- (16.5,4.5);
			
				\fill[pattern=north east lines, pattern color=niceGreen] (7.5,4.5) -- (12,0) -- (16.5,4.5) .. controls (15.5,3.5) and (13,1) .. (12,1) .. controls (11,1) and (8.5,3.5) .. (7.5,4.5); 
				\fill[pattern=north east lines, pattern color=niceGreen] (7.5,-4.5) -- (12,0) -- (16.5,-4.5) .. controls (15.5,-3.5) and (13,-1) .. (12,-1) .. controls (11,-1) and (8.5,-3.5) .. (7.5,-4.5); 
			\end{pgfonlayer}
			
			\node at (12,3.5) {\large $ \cal{I} = +1$};
			\node at (15,-1) {\large $ \cal{I}^\# = +1$};
			
			\begin{scope}[decoration={markings, mark=at position 0.25 with {\arrow[EdgeState,thick]{>}}, mark=at position 0.75 with {\arrow[EdgeState,thick]{>}}}]
				\draw[EdgeState,thick,postaction={decorate}] (-4.5+12,4.5) -- (4.5+12,-4.5) node[EdgeState,pos=0.825,anchor=south west] (les) {\large $ \omega_L (k_1) $};
				\draw[EdgeState,thick,postaction={decorate}] (es1S.center) .. controls (12,0.5) and (12,0.5) .. (es1f.center);
				\draw[EdgeState,thick,postaction={decorate}] (es2S.center) .. controls (12,-0.5) and (12,-0.5) .. (es2f.center);
			\end{scope}
		\end{tikzpicture}
	\end{adjustbox}
	\caption{Left panel: Illustration of intersection numbers and related conventions. In this example, $ I (\mu_\epsilon,\omega_1) = +1$ and $ I (\mu_\epsilon,\omega_2) = -1$. By Eq. \eqref{eq:EdgeIndex}, $ \cal{I}^\# = -I(\mu_\epsilon, \omega_1) -I (\mu_\epsilon, \omega_2) = -1 +1 = 0 $. Middle panel: contribution of $ \omega_L (k_1) $ to the edge index. Right panel: allowed region (\enquote{light cone}) for bound states. If there exist bound states besides the one with dispersion relation $ \omega_L $, their net contribution to $ \cal{I}^\#$ is zero.}
	\label{fig:EdgeIndex}
\end{figure}

\section{Application: Rotating shallow-water model} \label{sec:SW}

The techniques developed above (in particular, compactification of the Brillouin zone via \enquote{fermion doubling}) apply to the rotating shallow-water model \cite{DMV17,TDV20,GJT21}, where they lead to the proof of a new instance of bulk-edge correspondence. The shallow-water Hamiltonians $ H_{SW}^\pm $ are given as the spin-1 counterparts of $ H_\pm $, cf.\,\eqref{eq:HPlusMinus}. They have no well-defined bulk invariant, but acquire one when embedded (as asymptotic limits, in the strong sense) into $ H_{SW} $, a Hamiltonian with variable angular velocity $f$. $H_{SW}$ has associated Bloch bundle $ \cal{E}_{SW} $ over $ S^2 $, whose Chern number is the bulk index $ \cal{I} = Ch (\cal{E}_{SW}) $. The edge invariant $ \cal{I}^\# $ is defined in complete analogy with Def.\,\ref{def:EdgeIndex}. Its value is explicitly computed for a velocity profile $ f(x_2) = f \op{sgn} (x_2) $, $f>0$ constant, resulting in $ \cal{I} = \cal{I}^\# = +2 $.

Consider $H_\pm$ as in Eq.\,\eqref{eq:HPlusMinus}. Substitute $ m_\pm $ by $ f_\pm $ and $ \vec{\sigma} $ by $ \vec{S} = (S_1,S_2,S_3) $, where
\begin{equation}
	S_1 = 
	\begin{pmatrix}
		0 & 1 & 0 \\
		1 & 0 & 0 \\
		0 & 0 & 0 
	\end{pmatrix} \,, \qquad
	S_2 = 
	\begin{pmatrix}
		0 & 0 & 1 \\
		0 & 0 & 0 \\
		1 & 0 & 0 
	\end{pmatrix} \,, \qquad
	S_3 = 
	\begin{pmatrix}
		0 & 0 & 0 \\
		0 & 0 & - \im \\
		0 & +\im & 0
	\end{pmatrix} \,. \label{eq:SMatrices}
\end{equation}
This results in new spin-1 operators
\begin{equation}
	H_{SW}^\pm \coloneqq (p_1,p_2,f_\pm) \cdot \vec{S} \,. \label{eq:HpmSW}
\end{equation}
Eigenstates of $ H_{SW}^\pm $ are solutions of the shallow-water equations, used among other things to describe Earth's oceanic layers \cite{DMV17}. The coefficients $ f_\pm $ are thus recognized as positive (negative) angular velocities, determining sign and strength of the Coriolis force. On Earth, $ f=0 $ at the equator.

The Hamiltonians $ H_{SW}^\pm $ are translation invariant (by construction) and particle-hole symmetric. Writing them fiber-wise, one finds the spectra
\begin{equation}
	\sigma ( H_{SW}^\pm (\ul{k}) ) = \sigma_{\mathrm{e}} ( H_{SW}^\pm (\ul{k}) ) = \{ - \sqrt{k^2 + f_\pm^2} \} \cup \{0\} \cup \{ \sqrt{k^2 + f_\pm^2} \} \,, \quad (\ul{k} \in \bbR^2) \,. \label{eq:Bands}
\end{equation}
The Brillouin zone is again $ \bbR^2 $ (non-compact). As in Sec.\,\ref{sec:Setup}, define the bundles $ \cal{E}_{SW}^\pm $ associated with the positive band of $ H_{SW}^\pm (\ul{k}) $. These bundles have ill-defined Chern numbers. Even so, computing \eqref{eq:ChernIntegral} happens to return integer values
\begin{equation}
	C\hbar (\cal{E}^{\pm}_{SW}) = \pm 1 \,.
\end{equation} 

Introduce the Hamiltonian
\begin{equation}
	H_{SW} = (p_1, p_2, f) \cdot \vec{S} \,, \label{eq:HSW}
\end{equation}
where $f$ is a function of position operator $x_2$ and we assume: $ f(x_2) $ continuously differentiable and monotonous; $ f(x_2) \to f_\pm $ as $ x_2 \to \pm \infty $; $ f(0)=0 $. $ H_{SW} $ tends to $ H_{SW}^\pm $ asymptotically, in the sense that
\begin{equation}
	( H_{SW} - H_{SW}^\pm ) T_a \overset{s}{\longrightarrow} 0 \,, \qquad a \to \pm \infty \,. \label{eq:AsymptoticConvergence}
\end{equation}
The essential spectrum of the partially Fourier-transformed operator 
\begin{equation}
	H_{SW} (k_1) = (k_1, p_2, f) \cdot \vec{S} \label{eq:PartFourierHam}
\end{equation}
is
\begin{equation}
	\sigma_{\mathrm{e}} (H_{SW} (k_1)) = \left\{ \omega \in \bbR \ | \ |\omega| \geq \sqrt{k_1^2 + \min \{f_+^2, f_-^2\} } \right\} \cup \{ 0 \} \,. \label{eq:EssSpecHSW}
\end{equation}
Unlike Eq.\,\eqref{eq:EssSpecH}, the proof of \eqref{eq:EssSpecHSW} is non-trivial because of the flat zero-energy band. We defer this proof to App. \ref{App:EssSpec}.

The Brillouin zone of $ H_{SW} $ is compactified to the sphere $S^2$ just as in Sec.\,\ref{sec:Bulk}. The bundle $ \cal{E}_{SW} $ constructed over it has
\begin{equation}
	Ch (\cal{E}_{SW}) = C\hbar (\cal{E}^+_{SW}) - C\hbar (\cal{E}^-_{SW}) = 2 \,,
\end{equation}
which we take as our bulk index $ \cal{I} $.

Bulk-edge correspondence is finally probed in a special case, $ f_+ = f = -f_- $ and $ f(x_2) = f \op{sgn} (x_2) $. The bound states, found by direct computation (see App.\,\ref{App:InterfaceDetails}), are precisely two:
\begin{equation}
	\psi_{a} (\ul{x},k_1) = C_a \eul^{\im k_1 x_1} \eul^{- f |x_2|}
	\begin{pmatrix}
		1 \\
		-1 \\
		0
	\end{pmatrix} \,, \qquad 
	\psi_{b} (\ul{x}, k_1) = C_b \eul^{\im k_1 x_1} \eul^{- |k_1| |x_2|}
	\begin{pmatrix}
		0 \\
		\op{sgn} (x_2) \op{sgn} (k_1) \\
		\im
	\end{pmatrix} \,, \label{eq:EdgeStatesSW}
\end{equation}
with $ C_a, C_b \in \bbC $ normalization constants. The corresponding dispersion relations are
\begin{equation}
	\omega_a (k_1) = -k_1 \,, \qquad \omega_b (k_1) = f \op{sgn} (k_1) \,, \label{eq:EdgeChannels}
\end{equation}
as illustrated in Figure \ref{fig:EdgeIndexSW}.

Computing the edge index $ \cal{I}^\# $ as given in Def.\,\ref{def:EdgeIndex} yields
\begin{equation}
	\cal{I}^\# = - I (\mu_\epsilon, \omega_a) - I (\mu_\epsilon, \omega_b) = +2 = \cal{I} \,, \label{eq:BECSW}
\end{equation}
and Eq.\,\eqref{eq:BECSW} represents another instance of bulk-edge correspondence.
\begin{figure}[hbt]
	\centering
	\includegraphics[width=\linewidth]{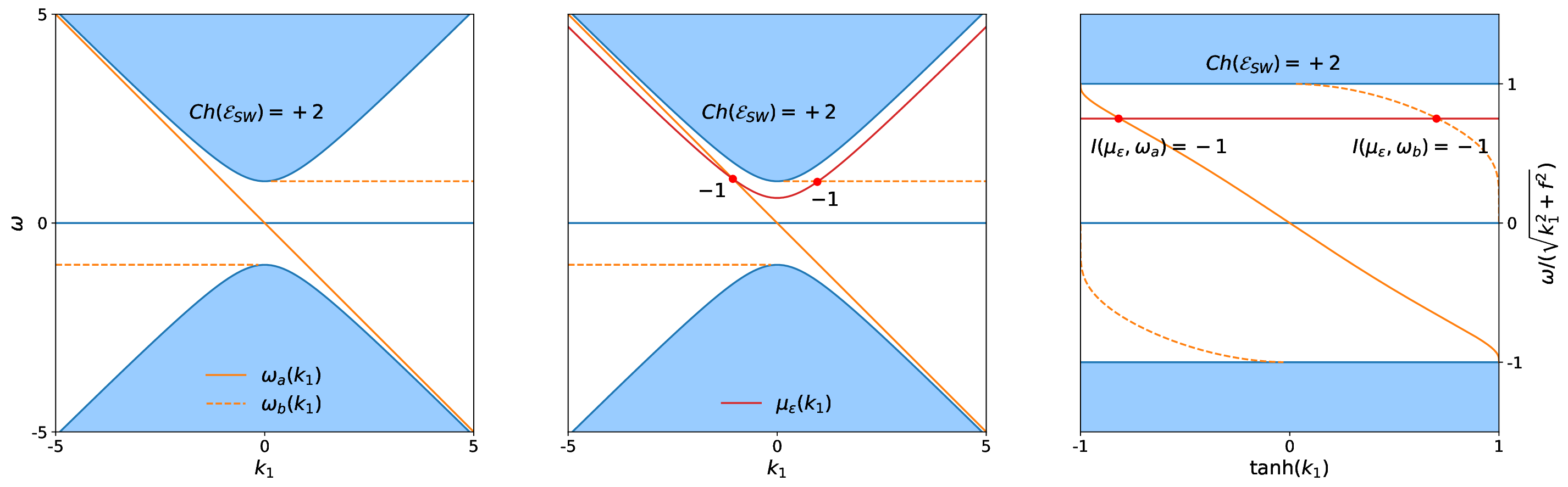}
	\caption{Edge spectrum and edge index when $ f(x_2) = f \op{sgn} (x_2) $. From left to right: Edge channels; Edge index as per Def.\,\ref{def:EdgeIndex}; Edge index in a rescaled picture $ \omega / \sqrt{k_1^2 + f^2} $ against $ \tanh (k_1) $.}
	\label{fig:EdgeIndexSW}
\end{figure}
\begin{remark} \label{rem:GeneralSolutions}
	It should be noticed that the solutions in Eq.\,\eqref{eq:BoundStates} of the Dirac eigenvalue problem extend immediately to the shallow-water setup. Indeed,
	\begin{equation}
		\psi^{R/L}_{SW} (\ul{x}; k_1) = \eul^{\im k_1 x_1} \hat{\psi}^{R/L}_{SW} (x_2;k_1) \,, \qquad  \hat{\psi}^{R/L}_{SW} (x_2;k_1) = C_{R/L} \eul^{\pm \int_0^{x_2} f(x_2') \diff x_2' } \begin{pmatrix} 1 \\ \pm 1 \\ 0 \end{pmatrix} \,, \qquad C_{R/L} \in \bbC \,, \label{eq:BoundStatesSW}
	\end{equation}
	are eigenstates of $ H_{SW} $ with eigenvalues $ \omega_{R/L} = \pm k_1 $ for any integrable angular velocity profile $ f(x_2) $. Moreover, $ \psi^R_{SW} $ ($\psi^L_{SW}$) is certainly bound for $ f(x_2) $ monotonically decreasing (increasing). If $ f(x_2) = f \op{sgn} (x_2) $, $ \psi^L_{SW} = \psi_a $ (cf.\,Eq.\,\eqref{eq:EdgeStatesSW}). 
	
	By contrast, $ \psi_b $ is a new solution with no counterpart in the Dirac case.
\end{remark}
\begin{remark}
	Ref.\,\cite{BalInterface} predicts $ 2 \pi \sigma_I = \op{sgn} (f) $ when $ f(y) = f \op{sgn} (y) $. Picking $ f >0 $ as above, the author thus concludes $ 2 \pi \sigma_I = +1 $, in disagreement with Eq.\,\eqref{eq:BECSW}. The discrepancy can be understood (at least) on two levels. On the heuristic level, $ \sigma_I $ physically represents a conductivity across some $ x_1 = \op{const} $ line. The state $ \psi_b $ of Eq.\,\eqref{eq:EdgeStatesSW} has zero group velocity in direction $x_1$, and does not contribute to transport. On the technical level, this index is only sensitive to edge channels that lie in the global bulk gaps $ (-f,0) $ and $ (0,f) $, i.e.\,not to $ \omega_b $ as in \eqref{eq:EdgeChannels}. This lowers the tally of edge channels by one.
\end{remark}

\section{Conclusions and future directions} \label{sec:Conclusion}

This work considered some of the topological properties of a 2D Dirac operator. In particular, the latter was shown to lack a bulk topological invariant in the constant mass case. It acquired one after being combined with its opposite-mass counterpart, as asymptotic limits of a global Hamiltonian with variable mass profile. It was useful to think of the latter as spatially interpolating between two distinct topological phases (positive and negative mass): The difference of their (ill-defined) invariants provided the global model with a (legitimate) bulk index $ \cal{I} $. This object was then related to the signed number $ \cal{I}^\# $ of bound states propagating along the interface $ x_2 = 0 $ of the two insulators. $ \cal{I}^\# $ was proven independent on the choice of mass profile $m(x_2)$, thus qualifying as an edge invariant. The equality $ \cal{I} = \cal{I}^\# $ represented an example of bulk-edge correspondence.

If some of the results on the 2D Dirac were known \cite{JackiwRebbi,HK10,BHW23,Bal22,BalInterface,QB23,Gruber06}, the real novelty of our treatment emerges when applying the same techniques to the shallow-water model. Our proof of bulk-edge correspondence, albeit only valid for the specific profile $ f(x_2) = f \op{sgn} (x_2) $ of the angular velocity, clashes with the claimed violations of \cite{GJT21} and \cite{BalInterface}. The first disagreement hints at a role of the odd-viscous regularizing term in spoiling the BEC. The second one suggests that computing transport indices w.r.t. constant fiducial lines in global bulk gaps may sometimes be inadequate when considering unbounded Hamiltonians.

From a more technical perspective, the results of App.\,\ref{App:EssSpec} on the essential spectrum of $ H_{SW} (k_1) $, cf.\,Eq.\,\eqref{eq:EssSpecHSW}, are also of interest. The authors are not aware of their existence elsewhere in the literature.

Future endeavours include: proving stability of the edge index under reasonable deformations of $f(x_2)$, in the shallow-water case; extending the \enquote{doubling} approach to different systems, starting perhaps with Dirac operators in higher spatial dimension; adapting the techniques to different symmetry classes; ultimately, achieving a general proof of bulk-boundary correspondence for models on the continuum. 

\section{Acknowledgements}

The authors are deeply grateful to Gian Michele Graf for his crucial contributions to Appendix \ref{App:EssSpec}, for numerous illuminating discussions and for useful advice on the structure and message of the paper. They moreover thank the anonymous referees for their punctual and constructive comments, which greatly helped improving the manuscript.

The first part of this work (Dirac Hamiltonian) was largely inspired by previous work of S.R., whereas A.T.\,developed the connection to shallow-water waves. The actual paper was conceptualized and structured by both. Most of the writing was then performed by A.T.\,, while S.R.\,contributed with App.\,\ref{app:BECFailureDirac}, some remarks and valuable revision work.

\begin{appendices}

\section{Proof of Proposition \ref{prop:EdgeIndex}} \label{app:ProofEdgeIndex}
The aim of this appendix is to flesh out the proof of Prop.\,\ref{prop:EdgeIndex}, so far only sketched in Sec.\,\ref{sec:edge}.
\begin{proof}
	Fix a fiber $ k_1 $ and look for solutions $ \hat{\psi} $ of
	\begin{equation}
		H(k_1) \hat{\psi} (x_2;k_1) = \omega (k_1) \hat{\psi} (x_2;k_1) \,, \label{eq:EgvlEqt}
	\end{equation}
	with energy $ \omega $ in the bulk gap. In the following, we drop the $ k_1 $-dependence since this parameter is fixed.
	
	Let $ \hat{\psi} (x_2) = (u (x_2),v (x_2))^T $. Eq.\,\eqref{eq:EgvlEqt} explicitly reads
	\begin{equation}
		\begin{pmatrix}
			m(x_2) & k_1 - \partial_2 \\
			k_1 + \partial_2 & -m(x_2)
		\end{pmatrix}
		\begin{pmatrix}
			u(x_2) \\
			v(x_2)
		\end{pmatrix} = \omega
		\begin{pmatrix}
			u(x_2) \\
			v(x_2)
		\end{pmatrix} \,, \label{eq:ExplicitEE}
	\end{equation}
	see Eq.\,\eqref{eq:HFiber}.
	
	Let $ s \coloneqq (u+v)/2 $, $ d \coloneqq (u-v)/2 $, or equivalently
	\begin{equation}
		u = s+d \quad \land \quad v = s-d \,. 
	\end{equation}
	Eq.\,\eqref{eq:ExplicitEE} gets rewritten as
	\begin{equation}
		\begin{cases}
			d'(x_2) + m(x_2) d(x_2) = (\omega - k_1) s(x_2) \\
			s'(x_2) - m(x_2) s(x_2) = -(\omega + k_1) d(x_2) \label{eq:SystemSD}
		\end{cases}
	\end{equation}
	in terms of $s,d$, where $ (\cdot)' = \partial_2 (\cdot) $. Eqs.\,\eqref{eq:SystemSD} can be decoupled at the price of going from first to second order ODEs
	\begin{equation}
		\begin{cases}
			\left( -\Delta + W_s (x_2) \right) s(x_2) = \omega^2 s (x_2) \,, & W_s (x_2) \coloneqq k_1^2 + m^2(x_2) + m'(x_2) \\
			\left( -\Delta + W_d (x_2) \right) d(x_2) = \omega^2 d (x_2) \,, & W_d (x_2) \coloneqq k_1^2 + m^2(x_2) - m'(x_2) \,. \label{eq:EffectiveSchro}
		\end{cases}
	\end{equation}
	Passing to second order may potentially introduce spurious solutions. Even so, if $s,d$ are (square-summable) solutions of System \eqref{eq:SystemSD}, then they are also (square-summable) solutions of System \eqref{eq:EffectiveSchro}. What follows only hinges on the contrapositive of the previous statement. 
	
	The ones in \eqref{eq:EffectiveSchro} are one-dimensional Schr\"odinger equations with potential $ W_{s/d} (x_2) $, bounded from below. $ \hat{\psi} $ is bound only if both $ u $ and $v$, or equivalently both $s$ and $d$, are. Bound eigenstates of a Schr\"odinger operator are known to have energy lying between the minimum and asymptotic value of the potential, see Eq.\,(2.91) in \,\cite{Griffiths} as a reference. By Eq.\,\eqref{eq:EffectiveSchro}, if $ s,d $ are both bound their energy $ \omega $ is such that
	\begin{equation}
		\max \{ \omega_s, \omega_d \} \leq \omega^2 < \min \{ k_1^2 + m_-^2, k_1^2 + m_+^2 \} \,, \label{eq:FirstBound}
	\end{equation}
	where
	\begin{equation}
		\omega_{s/d} \coloneqq \inf_{x_2} W_{s/d} (x_2) \,.
	\end{equation}
	
    If $ m' \geq 0 $ ($ m' \leq 0 $), then $ W_{s} (x_2) \geq W_d (x_2) $ ($ W_{s} (x_2) \geq W_d (x_2) $) for all $x_2$ and thus $ \max \{ \omega_s, \omega_d \} = \omega_s $ ($ \max \{ \omega_s, \omega_d \} = \omega_d $). Moreover, $ \omega_s \geq k_1^2 $ ($ \omega_d \geq k_1^2 $) by
	\begin{align}
		\omega_s &\coloneqq \inf_{x_2} (k_1^2 + m^2 (x_2) + m'(x_2)) \geq k_1^2 + \inf_{x_2} m^2 (x_2) + \inf_{x_2} m'(x_2) = k_1^2  \label{eq:OmegasBound} \\ 
		\left( \omega_d \right. &\coloneqq \inf_{x_2} (k_1^2 + m^2 (x_2) - m'(x_2)) \geq k_1^2 + \inf_{x_2} m^2 (x_2) + \sup_{x_2} m'(x_2) = \left. k_1^2 \right) \,. \label{eq:OmegadBound}
	\end{align}
	In either case, all bound states must have energy $ \omega $ satisfying
	\begin{equation}
		k_1^2 \leq \omega^2 < \min \{ k_1^2 + m_-^2, k_1^2 + m_+^2 \} \equiv k_1^2 + \tilde{m}^2 \,. \label{eq:ChainIneq}
	\end{equation}
	
	Notice that $ \omega^2 < k_1^2 + \tilde{m}^2 $ is nothing but the gap condition. Combined with $ k_1^2 \leq \omega^2 \leftrightarrow |\omega| \geq |k_1| $, it tells us that the allowed bound-state energies must lie between the light cone $ |\omega| = | k_1| $ and the essential spectrum $ \sigma_{\mathrm{e}} (H) $.
	
	As pointed out in Sec.\,\ref{sec:edge}, states emerging from and disappearing into the same band give a zero net contribution to $ \cal{I}^\# $. The relevant ones must thus cross the entire band gap. By Eq.\,\eqref{eq:ChainIneq}, they can only do so if their energy is $ \omega = 0 $ at $ k_1 = 0 $. At $ \omega = k_1 = 0 $, system \eqref{eq:SystemSD} reduces to
	\begin{equation}
		\begin{cases}
			d'(x_2) + m(x_2) d(x_2) = 0 \\
			s'(x_2) - m(x_2) s(x_2) = 0
		\end{cases}
		\quad \longleftrightarrow \quad 
		\begin{cases}
			d'(x_2) = D \eul^{- \int_0^{x_2} m(x_2') \diff x_2' } \\
			s'(x_2) = S \eul^{ \int_0^{x_2} m(x_2') \diff x_2' } \,,
		\end{cases}
	\end{equation}
	with $ S,D \in \bbC $ some normalization constants. When $ m' \geq 0 $, both are square-summable only for $ S=0 $, i.e.\,$ s(x_2) \equiv 0 $. Then $ u(x_2) = d(x_2) = -v(x_2) $, and
	\begin{equation}
		\hat{\psi} (x_2) = D \eul^{- \int_0^{x_2} m(x_2') \diff x_2' } 
		\begin{pmatrix}
			1 \\
			-1
		\end{pmatrix} \equiv \hat{\psi}^L (x_2)
	\end{equation}
	as claimed. Acting with the Hamiltonian reveals the dispersion relation $ \omega_L = -k_1 $, and $ \cal{I}^\# = - I (\mu_\epsilon, \omega_L) = +1 = \op{sgn} (m') $.
	
	Similarly, when $ m' \leq 0 $ square-summability requires $ d(x_2) \equiv 0 $, so that $ u(x_2) = s(x_2) = v(x_2) $ and
	\begin{equation}
		\hat{\psi} (x_2) = S \eul^{ \int_0^{x_2} m(x_2') \diff x_2' } 
		\begin{pmatrix}
			1 \\
			1
		\end{pmatrix} \equiv \hat{\psi}^R (x_2) \,.
	\end{equation}
	This time $ \omega_R = k_1 $, and $ \cal{I}^\# = - I (\mu_\epsilon, \omega_R) = -1 = \op{sgn} (m') $.
\end{proof}

\section{Essential spectrum of the shallow-water Hamiltonian} \label{App:EssSpec}

The aim of this appendix is to prove the following theorem, which implies Eq.\,\eqref{eq:EssSpecHSW} directly, and all the results that lead to it.
\vspace{1\baselineskip}
\begin{theorem} \label{thm:EssSpec}
	Let $ H_{SW} (k_1) $  be as in Eq.\,\eqref{eq:PartFourierHam} and $ H^\pm_{SW} (k_1) = (k_1, - \im \partial_2, f_\pm) \cdot \vec{S} $, where $ f_\pm = \lim_{x_2 \to \pm \infty} f(x_2) $. Then
	\begin{equation}
		\sigma_{\mathrm{e}} (H_{SW} (k_1)) = \bigcup\limits_{s = \pm} \sigma_{\mathrm{e}} ( H^s_{SW} (k_1) ) \label{eq:SpecEquality}
	\end{equation} 
	for all $ k_1 \in \bbR $.
\end{theorem}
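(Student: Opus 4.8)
The plan is to establish the two inclusions of \eqref{eq:SpecEquality} separately, using Weyl's singular-sequence criterion: for a self-adjoint operator $A$, a real $\lambda$ lies in $\sigma_{\mathrm{e}}(A)$ iff there is a normalized sequence $\psi_n$ with $\psi_n \rightharpoonup 0$ and $\|(A-\lambda)\psi_n\| \to 0$. The operator $H_{SW}(k_1)$ in \eqref{eq:PartFourierHam} is a $3\times 3$ matrix differential operator in $x_2$ whose middle row carries no $\partial_2$; this algebraic (non-dynamical) component is precisely what produces the flat band, and it is the feature that obstructs a direct appeal to the Dirac-case theorem (Thm.\,3.11 in \cite{SchroOps}).

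For the inclusion $\bigcup_s \sigma_{\mathrm{e}}(H^s_{SW}(k_1)) \subseteq \sigma_{\mathrm{e}}(H_{SW}(k_1))$ I would proceed as in the Dirac case. Since $H^{\pm}_{SW}(k_1)$ are translation invariant in $x_2$, their spectra are purely essential, so for any $\lambda \in \sigma_{\mathrm{e}}(H^{+}_{SW}(k_1))$ there is a Weyl sequence $\phi_n$; after a compactly supported cutoff one may assume each $\phi_n$ is supported in a fixed bounded set. Translating far to the right via $T_{a_n}=\eul^{-\im p_2 a_n}$ with $a_n\to+\infty$ and invoking the strong convergence \eqref{eq:AsymptoticConvergence}, one checks $\|(H_{SW}(k_1)-\lambda)T_{a_n}\phi_n\|\to 0$ while $T_{a_n}\phi_n\rightharpoonup 0$, producing a singular sequence for $H_{SW}(k_1)$. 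The same argument with $a_n\to-\infty$ handles $H^{-}_{SW}(k_1)$. In particular $\lambda=0$, present in both asymptotic spectra through the flat band, is captured this way.

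The reverse inclusion $\sigma_{\mathrm{e}}(H_{SW}(k_1))\subseteq \bigcup_s \sigma_{\mathrm{e}}(H^s_{SW}(k_1))$ is the substantial part. Given $\lambda\in\sigma_{\mathrm{e}}(H_{SW}(k_1))$ with singular sequence $\psi_n$, I would introduce an IMS-type partition of unity $j_-^2+j_0^2+j_+^2=1$, with $j_0$ compactly supported and $j_\pm$ supported in $\{\pm x_2>R\}$. Because $\psi_n\rightharpoonup 0$ and $j_0$ has compact support, $j_0\psi_n\to0$ strongly by Rellich, so the interior piece is negligible. On the two tails the commutators $[H_{SW}(k_1),j_\pm]$ reduce to multiplication by $j_\pm'$ in the off-diagonal first-order entries, hence are bounded and supported where $j_\pm'\neq0$; together with \eqref{eq:AsymptoticConvergence} this lets one replace $H_{SW}(k_1)$ by $H^{\pm}_{SW}(k_1)$ on the support of $j_\pm$. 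Passing to a subsequence along which $\|j_+\psi_n\|$ or $\|j_-\psi_n\|$ stays bounded below, the corresponding normalized tail is a singular sequence for $H^{+}_{SW}(k_1)$ or $H^{-}_{SW}(k_1)$, giving $\lambda\in\sigma_{\mathrm{e}}(H^{+}_{SW}(k_1))\cup\sigma_{\mathrm{e}}(H^{-}_{SW}(k_1))$.

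The hard part is that the localization estimates above implicitly demand control of $\partial_2\psi_n$ in every component, which the non-elliptic middle row does not provide — this is where the flat band bites. I would quarantine the zero mode as follows. For $\lambda\neq0$ the algebraic second equation can be solved for the middle component, $u_2=(k_1u_1-\im f u_3)/\lambda$, and substituting it back reduces $(H_{SW}(k_1)-\lambda)\psi=0$ to an effective $2\times2$ first-order (Dirac-type, hence elliptic) system for $(u_1,u_3)$ whose mass profile still saturates to $f_\pm$ at $\pm\infty$; the Dirac-case machinery then applies verbatim and yields the tail estimates uniformly for $\lambda$ bounded away from $0$. The single remaining value $\lambda=0$ needs no argument, since it belongs to $\bigcup_s\sigma_{\mathrm{e}}(H^s_{SW}(k_1))$ through the flat band. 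Combining the two inclusions gives \eqref{eq:SpecEquality}, and \eqref{eq:EssSpecHSW} follows upon inserting $\sigma_{\mathrm{e}}(H^{\pm}_{SW}(k_1))=\{0\}\cup\{\,|\omega|\ge\sqrt{k_1^2+f_\pm^2}\,\}$.
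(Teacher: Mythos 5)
Your overall strategy is sound and, for the hard inclusion, genuinely different from the paper's. Both arguments hinge on the same key fact --- that a Weyl sequence for $\lambda\neq 0$ cannot carry mass on compact sets, because the components on which $H_{SW}(k_1)$ acts dynamically are controlled in $H^1$ --- but you reach it by a Schur/Feshbach elimination of the non-dynamical middle component followed by an IMS/Rellich localization, whereas the paper never reduces the system: it writes $\chi\psi_n=\chi(1-g(H))\psi_n+\chi g(H)\psi_n$ for a bump $g$ equal to $1$ near $\lambda$ with $0\notin\op{supp}g$, and proves that $\chi H(H-z)^{-2}$ is compact (Lemma \ref{lem:KCompact}) via the operator inequality $H(p_2^2+1)H\leq C\left((H-\bar z)(H-z)\right)^2$ (Lemma \ref{lem:Handp}), itself derived from a Weitzenb\"ock-type algebraic identity (Lemma \ref{lem:AlgIdentity}). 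Your route is more elementary and transparent about \emph{why} the flat band is harmless away from $\omega=0$; the paper's buys a quantitative, basis-free bound that packages the same ellipticity gain once and for all. Two points in your write-up need care. First, you solve the exact equation $u_2=(k_1u_1-\im f u_3)/\lambda$, but a Weyl sequence only satisfies the middle row up to an $L^2$-small error $r_n$; the fix is immediate (rows $1$ and $3$ of $(H_{SW}(k_1)-\lambda)\psi_n\to 0$ already bound $p_2\eta_n$ and $p_2v_n$ in $L^2$, and the middle row then gives $\chi u_n\to 0$ after dividing by $\lambda$), but it should be said, since otherwise the Rellich step is applied to objects that do not exist. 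Second, ``the Dirac-case machinery applies verbatim'' is optimistic: the reduced $2\times 2$ problem is a $\lambda$-dependent operator pencil, not a fixed Dirac operator, so one must run the decomposition principle for $L_\lambda-\lambda$ at the single value $0$ and then undo the Schur reduction for the constant-coefficient asymptotic operators; this is routine but not literally a citation of Thm.\,3.11 of \cite{SchroOps}. The minor claim in your first inclusion that the $\phi_n$ may be taken supported in a \emph{fixed} bounded set is not needed and not generally true; it suffices to choose $a_n$ large relative to $n$ so that $\Vert(H_{SW}(k_1)-H^{\pm}_{SW}(k_1))T_{a_n}\phi_n\Vert\to 0$.
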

The entire section concerns the shallow-water model, and in particular the operators $H_{SW}$ (cf.\,Eq.\,\eqref{eq:HSW}) and $ H^\pm_{SW} $ (cf.\,Eq.\,\eqref{eq:HpmSW}). Below, we thus drop subscripts $ (\cdot)_{SW} $ for readability.

Notice that Thm.\,\ref{thm:EssSpec} could also be stated as $ \sigma_{\mathrm{e}} (H (k_1)) = \cup_{s = \pm} \sigma (H^s (k_1)) $, because the spectrum of $ H^\pm $ is purely essential. This fact is proven as a warm-up and to display some of the techniques employed later.
\begin{lemma} \label{lem:EssSpecHpm}
	$$ \sigma (H^s (k_1)) = \sigma_{\mathrm{e}} ( H^s (k_1) ) \,, \qquad ( s = \pm ) \,. $$
\end{lemma}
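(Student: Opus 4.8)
The plan is to show that the discrete spectrum of $H^s(k_1)$ is empty; since $\sigma_{\mathrm{e}}\subseteq\sigma$ always holds, this is equivalent to the claimed equality. The crucial structural fact is that, for the asymptotic Hamiltonian, the angular velocity $f_s\in\{f_+,f_-\}$ is a \emph{constant}, so $H^s(k_1)$ has coefficients independent of $x_2$ and commutes with the translations $T_a=\eul^{-\im p_2 a}$.

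First I would diagonalize by a partial Fourier transform in $x_2$, sending $p_2\mapsto k_2$. This conjugates $H^s(k_1)$ into the operator of multiplication by the Hermitian-matrix-valued symbol $M(k_2)=(k_1,k_2,f_s)\cdot\vec{S}$ on $L^2(\bbR_{k_2})\otimes\bbC^3$. Diagonalizing $M(k_2)$ pointwise reproduces the three bands of Eq.\,\eqref{eq:Bands}, namely $\omega_0(k_2)=0$ and $\omega_\pm(k_2)=\pm\sqrt{k_1^2+k_2^2+f_s^2}$. The spectrum of a multiplication operator is the union of the essential ranges of its eigenvalue bands, which here equals $\{0\}\cup\{\omega:|\omega|\geq\sqrt{k_1^2+f_s^2}\}$, i.e.\,the fiber spectrum of Eq.\,\eqref{eq:Bands} integrated over $k_2$.

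It then remains to check that every point of this spectrum is essential. The dispersive bands $\omega_\pm$ are strictly monotone in $|k_2|$, hence attain each value in $\{|\omega|>\sqrt{k_1^2+f_s^2}\}$ on a null set; they contribute purely continuous spectrum, with no eigenvalues, so these points lie in $\sigma_{\mathrm{e}}$. The flat band is the delicate case, and is precisely what makes the full statement of Thm.\,\ref{thm:EssSpec} non-trivial. For each $k_2$ the kernel of $M(k_2)$ is one-dimensional, spanned by some $n(k_2)$, so the $0$-eigenspace of the multiplication operator is $\{g\cdot n:g\in L^2(\bbR)\}$, which is infinite-dimensional. Thus $0$ is an eigenvalue of \emph{infinite} multiplicity: although it is isolated from the dispersive bands (recall $f_s=f_\pm\neq 0$, so $\sqrt{k_1^2+f_s^2}\geq|f_s|>0$), infinite multiplicity places it in $\sigma_{\mathrm{e}}$ rather than in the discrete spectrum.

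The main obstacle is therefore this zero band: one must rule out that it is a genuine bound state (an isolated eigenvalue of finite multiplicity, which would belong to $\sigma_{\mathrm{disc}}$), and the resolution is the infinite-multiplicity count above. A clean alternative---closer to the techniques used for Thm.\,\ref{thm:EssSpec} itself---avoids Fourier analysis: given $\lambda\in\sigma(H^s(k_1))$, choose normalized $\psi_n$ with $\|(H^s(k_1)-\lambda)\psi_n\|<1/n$ and compact support, then translate them, $\tilde{\psi}_n\coloneqq T_{a_n}\psi_n$ with $a_n\to\infty$ chosen so the supports are disjoint. Translation invariance gives $\|(H^s(k_1)-\lambda)\tilde{\psi}_n\|=\|(H^s(k_1)-\lambda)\psi_n\|<1/n$ while $\tilde{\psi}_n\rightharpoonup 0$, so $(\tilde{\psi}_n)$ is a singular Weyl sequence and $\lambda\in\sigma_{\mathrm{e}}$. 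Either route yields $\sigma(H^s(k_1))=\sigma_{\mathrm{e}}(H^s(k_1))$.
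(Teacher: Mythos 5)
Your proposal is correct, and your ``clean alternative'' at the end is essentially the paper's own proof: the paper invokes Weyl's criterion and uses translation invariance, $[H^s(k_1),T_a]=0$, to turn any normalized sequence of approximate eigenvectors into one that additionally converges weakly to zero (``by suitable choice of $a_n$''), so that every spectral point is essential; your version with compactly supported, disjointly translated $\psi_n$ is a concrete implementation of that same idea. Your primary route --- partial Fourier transform in $x_2$, reduction to multiplication by the symbol $(k_1,k_2,f_s)\cdot\vec{S}$, and a band-by-band inspection --- is a genuinely different and equally valid argument. It buys more explicit information: it re-derives the band structure of Eq.~\eqref{eq:Bands} and isolates exactly why the flat band does not spoil the statement (the value $0$ is an isolated eigenvalue, but of \emph{infinite} multiplicity, hence essential rather than discrete), a point the paper's proof never needs to confront. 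The translation argument, by contrast, is shorter, needs no diagonalization, and --- as the paper's phrasing suggests --- deliberately rehearses the technique ($T_{a_n}$-translated Weyl sequences and $\chi T_a \overset{s}{\to} 0$) that is actually required for the harder Theorem~\ref{thm:EssSpec}, where $f$ is no longer constant and the Fourier route is unavailable. The only step worth making explicit in your Weyl-sequence route is the existence of compactly supported approximate eigenvectors, which follows since $C_c^\infty(\bbR)\otimes\bbC^3$ is a core for $H^s(k_1)$.
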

\begin{proof}
	We recall Weyl's criterion: Let $A$ be a self-adjoint operator on some Hilbert space $\cal{H}$. Then $ \lambda \in \bbR $ belongs to $ \sigma_{\mathrm{e}} (A) $ if and only if there exists a so-called Weyl sequence $ \{ \psi_n \}_n \in \cal{H} $ such that $ \Vert \psi_n \Vert = 1 $, $ \psi_n \overset{w}{\to} 0 $ and
	\begin{equation}
		(A - \lambda) \psi_n \to 0 
	\end{equation}
	in Hilbert-space norm.
	
	Both operators $ H^s $ are translation invariant in $ x_2 $, i.e.\,commuting with $ T_a = \eul^{- \im p_2 a} $, $ (a \in \bbR) $. Then, any sequence of approximate eigenvectors, $ (H^s - \lambda ) \psi_n \to 0 $, $ (\Vert \psi_n \Vert = 1) $ can be turned into one, $ \tilde{\psi}_n \coloneqq T_{a_n} \psi_n $ that also has $ \tilde{\psi}_n \overset{w}{\to} 0 $ by suitable choice of $ a_n $. Thus, by Weyl's criterion $ \lambda \in \sigma (H^s (k_1)) $ implies $ \lambda \in \sigma_{\mathrm{e}} ( H^s (k_1) ) $ for all $k_1$.
	
	We actually have a slightly stronger statement: $ \chi \tilde{\psi}_n \to 0 $ for any $ \chi = \chi (x_2) $ with $ \chi (x_2) $ vanishing at $ x_2 \to + \infty $ (or $ x_2 \to - \infty $). This follows from
	\begin{equation}
		\chi T_a \overset{s}{\longrightarrow} 0 \,, \qquad (a \to + \infty)
	\end{equation}
	(respectively $ a \to - \infty $).
\end{proof}
The proof of the main theorem rests on the following lemma, which will be shown later.
\begin{lemma} \label{lem:KCompact}
	Let 
	\begin{equation}
		K = \chi H (H-z)^{-2} \,,
	\end{equation}
	where $\chi = \chi (x_2)$, $ \op{supp} \, \chi $ compact and we recall $ H \equiv H_{SW} $. Then $K$ is compact for all $ z \in \bbC, \op{Im} \, z \neq 0 $.
\end{lemma}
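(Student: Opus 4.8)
\emph{Strategy.} The crucial feature of $K$ is the factor $H$ standing to the left of the resolvents: writing $g(\lambda) \coloneqq \lambda (\lambda - z)^{-2}$ and using that $H$ commutes with $(H-z)^{-1}$, one has $K = \chi\, g(H)$ with $g$ continuous and vanishing \emph{both} at $\lambda = 0$ and as $\lambda \to \pm\infty$. The decay at infinity (where $g(\lambda) \sim 1/\lambda$) is what makes $g(H)$ a smoothing operator in the $x_2$-direction, while the vanishing at $\lambda = 0$ annihilates the flat band sitting at the bottom of $\sigma_{\mathrm e}(H)$. The latter is indispensable: the bare localized resolvent $\chi (H-z)^{-1}$ is \emph{not} compact, precisely because the flat band contributes a non-decaying, projection-like piece to it (for constant velocity, literally $-z^{-1}\chi P_0$). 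I would therefore keep the factor $H$ intact throughout and show directly that $\chi\, g(H)$ is compact, rather than splitting off $\chi(H-z)^{-1}$.

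\emph{Free (constant-velocity) model.} First I would treat the translation-invariant operators $H^{\pm}(k_1)$ of Lemma \ref{lem:EssSpecHpm}, for which $g(H^{\pm})$ is an explicit $\bbC^3$-valued Fourier multiplier $m_\pm(p_2)$ with symbol $m_\pm(\xi) = \sigma_\pm(\xi)\,(\sigma_\pm(\xi) - z)^{-2}$, where $\sigma_\pm(\xi) = k_1 S_1 + \xi S_2 + f_\pm S_3$. Diagonalising $\sigma_\pm(\xi)$ (eigenvalues $\pm\sqrt{k_1^2 + \xi^2 + f_\pm^2}$ and $0$, cf.\,Eq.\,\eqref{eq:Bands}) shows that $\|m_\pm(\xi)\| \to 0$ as $|\xi| \to \infty$: the two dispersive bands give $g$-values of order $1/|\xi|$, and the flat-band eigenvalue is sent to $g(0) = 0$. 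Hence $\chi\, g(H^{\pm}) = \chi(x_2)\, m_\pm(p_2)$ is a product $a(x_2)\,b(p_2)$ of a (compactly supported) matrix function of position and a matrix function of momentum, both vanishing at infinity, and is therefore compact by the classical criterion that such products are compact (indeed Hilbert--Schmidt when the factors are square-integrable; see e.g.\,Reed--Simon, Thm.\,XI.20).

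\emph{Variable-velocity model and the main difficulty.} To reach the genuine operator $H = H_{SW}(k_1)$ I would compare it with a constant-velocity reference $H^{(c)} = k_1 S_1 + p_2 S_2 + f_c S_3$ and expand through the second resolvent identity, the perturbation being the bounded multiplication operator $W \coloneqq (f(x_2) - f_c) S_3$. Each resulting term carries at least one localized factor, and, using that a compact operator times a bounded operator is compact while commuting $\chi$ and the multiplications by $f$ through the resolvents (the commutators $[\chi, (H-z)^{-1}]$ and $[f, (H-z)^{-1}]$ themselves producing smoothing factors), one propagates the compactness of the free case to $\chi\,g(H)$. \textbf{The main obstacle} is exactly the interplay between the flat band and the variable velocity: unlike the constant case, where $0$ is an infinitely degenerate eigenvalue with eigenprojection $P_0$, for $f' \not\equiv 0$ one checks that $H\psi = 0$ forces $\psi \equiv 0$, so $0$ becomes embedded continuous spectrum with no gap above it. Consequently spectral localization near $0$ no longer coincides with momentum localization, and one cannot simply invoke a functional-calculus or Rellich argument on $(\ker H)^\perp$. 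The delicate step is to prove that $g(H)$ nonetheless maps $L^2$ into functions whose $\chi$-localization is precompact, i.e.\,that the factor $H$ supplies exactly one order of $x_2$-regularity \emph{uniformly} down to the bottom of the spectrum; controlling the perturbation series (equivalently, the off-diagonal decay of the integral kernel of $g(H)$) so that this survives the non-commutativity of $f(x_2)$ with $p_2$ is where the real work lies.
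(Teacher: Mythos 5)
Your diagnosis of why the factor $H$ matters is exactly right: $g(\lambda)=\lambda(\lambda-z)^{-2}$ vanishes both at $\lambda=0$ and at infinity, the bare localized resolvent $\chi(H-z)^{-1}$ fails to be compact because of the flat band, and your constant-coefficient computation is correct. But the argument stops precisely where the lemma needs proving. The perturbative route you sketch does not close. First, the perturbation $W=(f(x_2)-f_c)S_3$ is bounded but cannot decay at both spatial infinities (whatever reference $f_c$ you pick, $f-f_c$ tends to a nonzero constant at $x_2\to+\infty$ or at $x_2\to-\infty$), so the second resolvent identity does not by itself hand you localized factors. Second, the localized factors that do appear from commutators, e.g.\ $[\chi,(H-z)^{-1}]=-(H-z)^{-1}[\chi,H](H-z)^{-1}$ with $[\chi,H]=\im\chi'S_2$, are of the very form \enquote{compactly supported function times resolvent} that you yourself correctly identify as non-compact due to the flat band, so no circularity is broken. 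Your closing paragraph concedes that the key estimate --- that the left factor $H$ supplies one order of $x_2$-regularity uniformly down to the bottom of the spectrum for the variable-$f$ operator --- is \enquote{where the real work lies}; that is, it is named as the remaining task rather than established. As it stands the proposal is an (accurate) analysis of the difficulty, not a proof.

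The paper closes exactly this gap non-perturbatively. Lemma \ref{lem:Handp} establishes the operator inequality $H(p_2^2+1)H\leq C\left((H-\bar z)(H-z)\right)^2$ directly for the variable-$f$ Hamiltonian; it is derived from the exact Weitzenb\"ock-type identity of Lemma \ref{lem:AlgIdentity}, $(\vec d\cdot\vec S)\,d^2\,(\vec d\cdot\vec S)=(\vec d\cdot\vec S)^4+\tfrac12\left((\vec d\cdot\vec S)D+D^*(\vec d\cdot\vec S)\right)$, in which all non-commutativity of $f(x_2)$ with $p_2$ is collected into the explicit matrix $D$ (containing only $f'$, $f''$ and a single power of $p_2$) and then absorbed via $HD+D^*H\leq\epsilon HDD^*H+\epsilon^{-1}$. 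This inequality is precisely the statement that $A=(p_2+\im)H(H-z)^{-2}$ is bounded, after which $K=\chi(p_2+\im)^{-1}A$ is compact because $\chi(p_2+\im)^{-1}$ is (position and momentum factors both vanishing at infinity, Remark \ref{rem:Compact}). To complete your argument you would need an a priori bound of this kind for the full variable-coefficient operator; a resolvent expansion around a constant-coefficient model is not the route the paper takes, and for the reasons above it is unlikely to succeed as sketched.
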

\begin{proof}[Proof of Thm.\,\ref{thm:EssSpec}]
	We start by showing
	\begin{equation}
		\sigma_{\mathrm{e}} (H^\pm (k_1)) \subseteq \sigma_{\mathrm{e}} (H (k_1)) \,, \qquad \forall k_1 \,. \label{eq:FirstInclusion}
	\end{equation}
	Assume $ \lambda \in \sigma_{\mathrm{e}} (H^+ (k_1)) $. Just as in the proof of Lemma \ref{lem:EssSpecHpm}, given a sequence $ \psi_n $ of approximate eigenstates we construct a new one 
	\begin{equation}
		\varphi_n \coloneqq T_{a_n} \psi_n \,.
	\end{equation}
	By suitable choice of $ a_n \in \bbR $, $ \varphi_n $ can be made weakly convergent to zero and such that $ \chi \varphi_n \to 0 $ for any $ \chi = \chi(x_2) $ vanishing at $x_2 \to + \infty$. This follows from
	\begin{equation}
		\chi \ T_a \overset{s}{\longrightarrow} 0 \,, \quad (a \to + \infty) \,. \label{eq:ChiT}
	\end{equation}
	By the same reason \eqref{eq:ChiT}, we have
	\begin{equation}
		\left( H (k_1) - H_+ (k_1) \right) T_a \overset{s}{\longrightarrow} 0 \,, \quad (a \to + \infty) \,,
	\end{equation}
	and likewise for $ H_- $ and $ a \to - \infty $. The desired inclusions, cf.\,Eq.\,\eqref{eq:FirstInclusion}, follow by the triangle inequality.
	
	The result just derived and Eq.\,\eqref{eq:Bands} imply $ 0 \in \sigma_{\mathrm{e}} (H (k_1)) \cap \sigma_{\mathrm{e}} (H^\pm (k_1)) $. We are thus left to prove
	\begin{equation}
		\sigma_{\mathrm{e}} (H (k_1)) \setminus \{0\} \subset \bigcup\limits_{s = \pm} \sigma_{\mathrm{e}} (H^s (k_1)) \,. \label{eq:DiffInclusion}
	\end{equation}
	Let $ \lambda \in \sigma_{\mathrm{e}} (H (k_1)) \setminus \{0\} $, and let $ g = g(\lambda') $ be such that: $ g(\lambda') = 1 $ for all $ \lambda' $ in some neighbourhood $ U \ni \lambda$; $ \op{supp} \, g $ compact; $ 0 \notin \op{supp} \, g $. Let $ \psi_n $ be a Weyl sequence for $\lambda$,
	\begin{equation}
		(H- \lambda) \psi_n \to 0 \,, \qquad \psi_n \overset{w}{\longrightarrow} 0 \,, \qquad \Vert \psi_n \Vert \to 1 \,.
	\end{equation}
	We claim
	\begin{equation}
		\chi \psi_n \to 0 \label{eq:ChiPsi}
	\end{equation}
	for any $ \chi = \chi (x_2) $ with $ \op{supp} \chi $ compact. Indeed, if this is the case $ \psi_n $ must escape towards either positive or negative spatial infinity, where $ H (k_1) $ tends to $ H^+ (k_1) $ or $ H^- (k_1) $ strongly. In the first (second) case, $ \lambda $ is in the essential spectrum of $ H^+ (k_1) $ ($ H^- (k_1) $) by the reasoning employed to prove Eq.\,\eqref{eq:FirstInclusion}.
	
	To show Eq.\,\eqref{eq:ChiPsi} we write
	\begin{equation}
		\chi \psi_n = \chi (1 - g(H)) \psi_n + \chi g(H) \psi_n \,. \label{eq:TwoSummands}
	\end{equation}
	Here and onwards, results are still meant for all $k_1$ despite omitting $ k_1 $ from the notation. Since $ (1 - g(H)) (H - \lambda)^{-1} $ is bounded, the first term in Eq.\,\eqref{eq:TwoSummands} is bounded by a constant times $ \Vert (H-\lambda) \psi_n \Vert $, and thus complies with Eq.\,\eqref{eq:ChiPsi} by hypothesis $ (H-\lambda) \psi_n \to 0 $. Since $ B \coloneqq (H-z)^2 H^{-1} g(H) $ is also bounded, we just need to show that the second term vanishes:
	\begin{equation}
		\chi H (H-z)^{-2} B \psi_n \to 0 \,.
	\end{equation}
	This follows from Lemma \ref{lem:KCompact}.
\end{proof}
Before proving Lemma \ref{lem:KCompact}, we state a further one, to be proven later.
\begin{lemma} \label{lem:Handp}
	For any $ z \in \bbC $ with $ \op{Im} (z) \neq 0 $, there exists $ C = C(z) < \infty $ such that
	\begin{equation}
		H (p_2^2 +1) H \leq C \left( (H-\bar{z}) (H - z) \right)^2 \,, \label{eq:Handp}
	\end{equation}
	where again $ H = H_{SW} $.
\end{lemma}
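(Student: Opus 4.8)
The plan is to turn the operator inequality into a single relative bound and prove the latter by a short explicit computation that isolates the flat (zero) band. First I would strip off $z$: since $H = H_{SW}$ is self-adjoint, $\sigma(H) \subset \bbR$, and the scalar estimate $|\lambda - z|^2 \ge c_z (\lambda^2 + 1)$ on $\bbR$ (valid for some $c_z = c_z(z) > 0$ precisely because $\op{Im} z \neq 0$) gives, by the spectral theorem, $(H-\bar z)(H-z) \ge c_z (H^2 + \id)$ and hence $\big((H-\bar z)(H-z)\big)^2 \ge c_z^2 (H^2 + \id)^2$. Writing $H(p_2^2+1)H = H p_2^2 H + H^2$ and using $H^2 \le (H^2+\id)^2$, it then suffices to prove $H p_2^2 H \le C_1 (H^2+\id)^2$, whereupon $C = (C_1+1)/c_z^2$ works. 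Because $\langle H p_2^2 H \psi, \psi\rangle = \norm{p_2 H \psi}^2$ and $\langle (H^2+\id)^2 \psi, \psi\rangle = \norm{(H^2+\id)\psi}^2$, this is exactly the form bound $\norm{p_2 H \psi} \le \sqrt{C_1}\,\norm{(H^2+\id)\psi}$, to be checked on a core such as Schwartz functions valued in $\bbC^3$.

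I expect the main obstacle to be that $H$ does not control $p_2$. In the basis of $\vec S$, $H = \begin{pmatrix} 0 & k_1 & p_2 \\ k_1 & 0 & -\im f \\ p_2 & \im f & 0 \end{pmatrix}$, whose middle row carries no derivative; thus the flat-band component $\psi_2$ never sees $p_2$ in $H\psi$, and since $f(0) = 0$ it is not tamed by $\norm{(H^2+\id)\psi}$ either. Equivalently, the leading symbols of both $H p_2^2 H$ and $(H^2+\id)^2$ degenerate to $\xi^4 \op{diag}(1,0,1)$, so no top-order ellipticity is available and the middle direction must be handled by hand. The naive guess $p_2^2 \le C(H^2+\id)$ is simply false.

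The resolution, and the only model-specific input, is that the exact second-order combinations appearing in $p_2 H \psi$ are the ones appearing in $(H^2+\id)\psi$. I would compute $H^2$ as a matrix of operators using $p_2 f = f p_2 - \im f'$, the relevant entries being $(H^2)_{11} = k_1^2 + p_2^2$, $(H^2)_{22} = k_1^2 + f^2$ (no $p_2^2$), $(H^2)_{33} = p_2^2 + f^2$, $(H^2)_{12} = \im f p_2 + f'$, $(H^2)_{13} = -\im k_1 f$, $(H^2)_{23} = k_1 p_2$. Applying $p_2$ componentwise to $H\psi$ and substituting, each component of $p_2 H\psi$ becomes a component of $(H^2+\id)\psi$ plus bounded multiplications: $p_2^2 \psi_1 + \im f p_2 \psi_2 + f' \psi_2 = [(H^2+\id)\psi]_1 - (k_1^2+1)\psi_1 + \im k_1 f \psi_3$ and $k_1 p_2 \psi_2 + p_2^2 \psi_3 = [(H^2+\id)\psi]_3 - \im k_1 f \psi_1 - (f^2+1)\psi_3$, while in the middle component the dangerous $p_2 \psi_2$ terms cancel after inserting $p_2 \psi_1 = [H\psi]_3 - \im f \psi_2$ and $p_2 \psi_3 = [H\psi]_1 - k_1 \psi_2$, leaving $k_1 [H\psi]_3 - \im f [H\psi]_1 - f' \psi_3$. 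With $\norm{\psi} \le \norm{(H^2+\id)\psi}$, $\norm{H\psi} \le \norm{(H^2+\id)\psi}$, and $\norm{f}_\infty, \norm{f'}_\infty < \infty$ (part of the standing assumptions on $f$), each of the three component-norms is $\le C \norm{(H^2+\id)\psi}$, which yields the form bound and hence the lemma. The crux throughout is the structural fact that $H$ couples the flat-band component to the others only through the bounded coefficients $k_1$ and $f$, never through $p_2$; everything else is bookkeeping of the single commutator $[p_2, f] = -\im f'$.
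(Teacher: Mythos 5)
Your proof is correct, but it takes a genuinely different route from the paper's. The paper first establishes a general Weitzenb\"ock-type identity $(\vec d\cdot\vec S)\,d^2\,(\vec d\cdot\vec S)=(\vec d\cdot\vec S)^4+\tfrac12\bigl((\vec d\cdot\vec S)D+D^*(\vec d\cdot\vec S)\bigr)$ for an explicit matrix $D$ of commutators (Lemma \ref{lem:AlgIdentity}), specializes it to $H$, and then disposes of the unbounded entry $-2f'p_2$ of $D$ via the absorption inequality $HD+D^*H\le\epsilon HDD^*H+\epsilon^{-1}$ together with $p_2^2\le d^2$, choosing $\epsilon$ so small that the resulting term $\epsilon C_1 Hp_2^2H$ can be moved back to the left-hand side; the final comparison of $2H^4+(2\epsilon C_2+1)H^2+\epsilon^{-1}$ with $\bigl((H-\bar z)(H-z)\bigr)^2$ is the same spectral-calculus step you perform at the outset. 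You replace the identity-plus-absorption by a direct componentwise verification of the equivalent relative bound $\norm{p_2H\psi}\le C\norm{(H^2+\id)\psi}$, in which the dangerous $p_2\psi_2$ contributions cancel exactly (after substituting $p_2\psi_1$ and $p_2\psi_3$ in terms of $H\psi$) rather than being absorbed; your matrix computations for $H^2$ and $p_2H\psi$ check out. Both arguments ultimately rest on the same structural fact -- the flat-band component couples to the others only through the bounded coefficients $k_1$ and $f$, never through $p_2$ -- but yours is more elementary and, as a bonus, only requires $f$ and $f'$ to be bounded, whereas the paper's matrix $D$ contains the entry $\im f''-2f'p_2$ and so tacitly assumes a bounded second derivative as well. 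What the paper's route buys in exchange is a reusable algebraic identity valid for arbitrary self-adjoint $d_i$ and a cleaner separation between the algebra (Lemma \ref{lem:AlgIdentity}) and the analysis.
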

\begin{proof}[Proof of Lemma \ref{lem:KCompact}]
	By Lemma \ref{lem:Handp}, the operator
	\begin{equation}
		A \coloneqq (p_2 + \im) H (H-z)^{-2} \,, \label{eq:A}
	\end{equation}
	is such that $ A^* A \leq C $ for some finite $C$, i.e.\,it is bounded.
	
	Then, $ K = \chi (p_2+ \im)^{-1} A $ is compact if $ \chi (p_2 + \im)^{-1} $ is. This is true because $ \chi $ and $ (p_2 + \im)^{-1} $ vanish at infinity in position and momentum space respectively, which is a known sufficient condition for compactness (see Remark \ref{rem:Compact} below). The result follows.
\end{proof}
\begin{remark} \label{rem:Compact}
	Consider the operator $ T = f(x) g(p) $ on some functional Hilbert space over $\bbR^n$, say $ L^2 (\bbR^n) $ for definiteness, where $x$ and $p$ denote the position and momentum operator respectively. It is proven e.g.\, in \cite{RSIII} (page 47, Thm.\,XI.20) that, if $ f,g \in L^q (\bbR^n) $ as functions, then $ T \in J_q \subset K $, where $ J_q $ denotes the $q$-th Schatten class and $K$ the compact operators. Even if $ f $ is only vanishing at infinity, its restriction $ f_L(x) \coloneqq f(x) \chi_{[-L,L]} (x) $ is in $ L^2 (\bbR^n) $, with $ \chi_{[-L,L]} $ indicator function of the hypercube $ [-L,L]^{\times n} $. The same holds for $g$. Then, by the aforementioned result, $ T_L \coloneqq f_L (x) g_L (p) \in J_2 \subset K $. The claim that $T$ is compact now follows by $ T_L \to T \ (L \to \infty) $ in operator norm, and the fact that convergent sequences of compact operators have compact limit.
\end{remark}
\begin{remark}
	The boundedness of $A$ in Eq.\,\eqref{eq:A} informally states that, if $p_2$ diverges along some sequence of states, then $ H(k_1) $ has to do so too, or to tend to zero.
\end{remark}
\vspace{1\baselineskip}
Before proving Lemma \ref{lem:Handp}, we state yet another one, to be once again proven later. It is an identity, somewhat close in spirit to the Weitzenb\"ock formula \cite{Semmelmann10}.
\vspace{1\baselineskip}
\begin{lemma} \label{lem:AlgIdentity}
	Let $ \vec{d} = (d_1,d_2,d_3) $ with $ d_i = d_i^* $ self-adjoint operators on some Hilbert space $ \cal{H} $. Let $ \vec{S} = (S_1,S_2,S_3) $ be as in Eq.\,\eqref{eq:SMatrices}. Then, on $ \cal{H} \otimes \bbC^3 $, the following identity holds
	\begin{equation}
		( \vec{d} \cdot \vec{S} ) d^2 ( \vec{d} \cdot \vec{S} ) = ( \vec{d} \cdot \vec{S} )^4 + \frac{1}{2} \left( (\vec{d} \cdot \vec{S}) D + D^* (\vec{d} \cdot \vec{S}) \right) \,, \label{eq:AlgId1}
	\end{equation} 
	where $ d^2 = d_1^2 + d_2^2 + d_3^2 $ and
	\begin{equation}
		D = 
		\begin{pmatrix}
			\im (d_1 d_3 d_2 - d_2 d_3 d_1 ) & [d_3^2,d_1] & [d_3^2,d_2] \\
			[d_2^2, d_1] & \im ( d_3 d_2 d_1 - d_1 d_2 d_3 ) & - \im [d_2^2, d_3] \\
			[d_1^2, d_2] & - \im [d_1^2, d_3] & \im ( d_2 d_1 d_3 - d_3 d_1 d_2 )
		\end{pmatrix} \,. \label{eq:D}
	\end{equation}
\end{lemma}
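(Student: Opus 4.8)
The plan is to collapse the identity to a much simpler equality and then verify that equality by a direct matrix computation. Write $A \coloneqq \vec{d} \cdot \vec{S}$. Because the $d_i$ are self-adjoint, the matrices $S_i$ are Hermitian, and the two families act on different tensor factors, $A$ is self-adjoint; likewise $d^2 = d_1^2 + d_2^2 + d_3^2$ is central, commuting with every $S_i$. The crux is that the matrix $D$ of Eq.~\eqref{eq:D} is exactly
\[ D = d^2 A - A^3 . \]
Granting this, the Lemma is immediate: taking adjoints gives $D^* = A d^2 - A^3$ (using $A^* = A$ and $(d^2 A)^* = A d^2$), so that
\[ A D + D^* A = \left( A d^2 A - A^4 \right) + \left( A d^2 A - A^4 \right) = 2 \left( A d^2 A - A^4 \right), \]
which is Eq.~\eqref{eq:AlgId1} after division by two. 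Hence the entire content of the statement is the single equality $D = d^2 A - A^3$.

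To establish it I would first write $A$ as the explicit $3 \times 3$ matrix of operators
\[ A = \begin{pmatrix} 0 & d_1 & d_2 \\ d_1 & 0 & -\im d_3 \\ d_2 & \im d_3 & 0 \end{pmatrix}, \]
then compute $A^2$ and $A^3 = A^2 A$ entrywise and subtract the result from $d^2 A$. The diagonal of $A^3$ yields the antisymmetric cyclic triple products (for instance $(A^3)_{11} = \im(d_2 d_3 d_1 - d_1 d_3 d_2)$), which — the diagonal of $d^2 A$ being zero — reproduce the diagonal of $D$. In each off-diagonal slot the cubic monomials of $d^2 A$ and of $A^3$ agree except for one transposed pair, so the order-preserving terms cancel and only a commutator such as $[d_3^2, d_1]$ survives, matching the off-diagonal entries of $D$.

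The structural reason that only commutators remain, and the way I would keep the bookkeeping honest, is the spin-$1$ cubic relation
\[ S_a S_b S_c + S_c S_b S_a = \delta_{ab} S_c + \delta_{bc} S_a , \qquad a,b,c \in \{1,2,3\}, \]
a finite check on the matrices of Eq.~\eqref{eq:SMatrices}. Expanding $A^3 = \sum_{a,b,c} d_a d_b d_c\, S_a S_b S_c$ and symmetrizing the outer indices $a \leftrightarrow c$, this relation reduces the symmetric part to $\tfrac{1}{2}(d^2 A + A d^2)$ — precisely its value for commuting scalars, where $(\vec{d} \cdot \vec{S})^3 = d^2 (\vec{d} \cdot \vec{S})$ — and leaves an antisymmetric remainder $R$ formed from $S_a S_b S_c - S_c S_b S_a$ weighted by $d_a d_b d_c$. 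One then reads off $d^2 A - A^3 = \tfrac{1}{2}[d^2, A] - R$, an expression manifestly assembled from the commutators $[d_j^2, d_i]$ and the cyclic differences $d_i d_j d_k - d_k d_j d_i$ that constitute $D$; all of them vanish when the $d_i$ commute, as they must.

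The main obstacle is purely organizational: tracking operator orderings through the triple products of the non-commuting $d_i$ and confirming that the surviving terms assemble into $D$ slot by slot. No analysis enters — this is an identity in the free $*$-algebra generated by $d_1, d_2, d_3$ tensored with the fixed $3 \times 3$ matrix algebra — so once $A^2$ and $A^3$ are in hand the verification is mechanical, if lengthy.
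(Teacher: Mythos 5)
Your proposal is correct and follows essentially the same route as the paper: both reduce the claim to the single identity $D = d^2(\vec{d}\cdot\vec{S}) - (\vec{d}\cdot\vec{S})^3$, verified entrywise, and then obtain Eq.\,\eqref{eq:AlgId1} by taking the adjoint of that identity and averaging the two resulting expressions for $(\vec{d}\cdot\vec{S})\,d^2\,(\vec{d}\cdot\vec{S})$. The Duffin--Kemmer--Petiau-type relation $S_aS_bS_c+S_cS_bS_a=\delta_{ab}S_c+\delta_{bc}S_a$ you invoke as bookkeeping is a nice structural gloss not present in the paper, but the core argument is identical.
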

\begin{remark}
	If the $ d_i $'s are numbers (or commuting operators), then $ D=0 $ and Eq.\,\eqref{eq:AlgId1} becomes trivial. In fact, the operator $ \vec{d} \cdot \vec{S} $ then has eigenvalues $ 0, \pm \Vert \vec{d} \Vert $.
\end{remark}
\begin{proof}[Proof of Lemma \ref{lem:Handp}]
	Apply the results of Lemma \ref{lem:AlgIdentity} to $ H \equiv H(k_1) = (k_1, p_2, f(x_2)) \cdot \vec{S} $ (operator on $ L^2 (\bbR) \otimes \bbC^3 $), having momentarily reinstated the momentum label $ k_1 $.
	
	Since $ \vec{d} = (k_1, - \im \partial_2, f) $ (where $ f \equiv f(x_2) $ here and throughout the proof)
	\begin{equation}
		D =
		\begin{pmatrix}
			-k_1 f' & 0 & 2 \im f f' \\
			0 & - k_1 f' & \im f'' - 2 f' p_2 \\
			0 & 0 & k_1 f' 
		\end{pmatrix} \,,
	\end{equation}
	with $ (\cdot)' = \partial_2 (\cdot) $. Moreover
	\begin{equation}
		H d^2 H = H^4 + \frac{1}{2} (H D + D^* H) \,.
	\end{equation}
	Now use $ A + A^* \leq \epsilon A A^* + \epsilon^{-1} $ for all $ \epsilon >0 $ and $ A = HD $ to bound
	\begin{equation}
		H D + D^* H \leq \epsilon H D D^* H + \epsilon^{-1} \,. \label{eq:EpsilonBound}
	\end{equation}
	Split $ D = D_1 + D_2 $ with
	\begin{equation}
		D_1 = 
		\begin{pmatrix}
			0 & 0 & 0 \\
			0 & 0 & - 2 f' p_2 \\
			0 & 0 & 0
		\end{pmatrix}
	\end{equation}
	and $ D_2 $ bounded. Then
	\begin{equation}
		D D^* \leq 2 (D_1 D_1^* + D_2 D_2^*) \,,
	\end{equation}
	which by 
	\begin{equation}
		D_1 D_1^* \leq 4 p_2 f' p_2 \leq C_1 p_2^2 \,, \qquad D_2 D_2^* \leq C_2
	\end{equation}
	becomes
	\begin{equation}
		D D^* \leq 2 ( C_1 p_2^2 + C_2 ) \,, \label{eq:DDstarBound}
	\end{equation}
	where $ C_1,C_2 > 0 $ some constants.
	
	Use $ p_2^2 \leq d^2 $ and plug Eq.\,\eqref{eq:DDstarBound} into Eq.\,\eqref{eq:EpsilonBound}:
	\begin{equation}
		H p_2^2 H \leq H d^2 H \leq H^4 + \epsilon H ( C_1 p_2^2 + C_2) H + (1/2 \epsilon) \,,
	\end{equation} 
	whence
	\begin{equation}
		(1- C_1 \epsilon) H p_2^2 H \leq H^4 + \epsilon C_2 H^2 + (1/2 \epsilon) \,. \label{eq:FinalBound}
	\end{equation}
	By picking (once and for all) $ \epsilon $ so small that $ C_1 \epsilon < 1/2 $, Eq.\,\eqref{eq:FinalBound} is recast as
	\begin{equation}
		H (p_2^2 + 1) H \leq 2 H^4 + ( 2 \epsilon C_2 + 1) H^2 + \epsilon^{-1} \,. \label{eq:SimilarBound}
	\end{equation}
	The thesis of the Lemma then follows by two observations. First, $ \forall z $ with $ \op{Im} z \neq 0 $, there exist $ \alpha, \beta > 0 $ such that
	\begin{equation}
		(H- \bar{z}) (H-z) \geq \alpha H^2 + \beta \,.
	\end{equation}
	Second, one can always find a finite, positive constant $ C = C(z) $ for which
	\begin{equation}
		2 H^4 + ( 2 \epsilon C_2 + 1) H^2 + \epsilon^{-1} \leq C (\alpha H^2 + \beta)^2 \,.
	\end{equation}
\end{proof}
\begin{proof}[Proof of Lemma \ref{lem:AlgIdentity}]
	Everything proceeds by direct calculation. We start by showing
	\begin{equation}
		d^2 ( \vec{d} \cdot \vec{S} ) = ( \vec{d} \cdot \vec{S} )^3 + D \,, \label{eq:d2dS}
	\end{equation}
	with $ D $ as in Eq.\,\eqref{eq:D}. This is seen by
	\begin{gather}
		( \vec{d} \cdot \vec{S} )^3 =
		\begin{pmatrix}
			- \im (d_1 d_3 d_2 - d_2 d_3 d_1) & d^2 d_1 - [d_3^2, d_1] & d^2 d_2 - [d_3^2, d_2] \\
			d^2 d_1 -[d_2^2, d_1] & -\im (d_3 d_2 d_1 - d_1 d_2 d_3) & - \im d^2 d_3 + \im [d_2^2, d_3] \\
			d^2 d_2 -[d_1^2, d_2] & + \im d^2 d_3 + \im [d_1^2, d_3] & -\im ( d_2 d_1 d_3 - d_3 d_1 d_2 ) 
		\end{pmatrix} \label{eq:dSCube} \\
		d^2 ( \vec{d} \cdot \vec{S} ) = 
		\begin{pmatrix}
			0 & d^2 d_1 & d^2 d_2 \\
			d^2 d_1 & 0 & - \im d^2 d_3 \\
			d^2 d_2 & + \im d^2 d_3 & 0  
		\end{pmatrix} \label{eq:dSd2dS}
	\end{gather}
	and comparison between Eqs.\,(\ref{eq:dSCube},\ref{eq:dSd2dS}).
	
	The conjugate of Eq.\,\eqref{eq:d2dS} reads
	\begin{equation}
		( \vec{d} \cdot \vec{S} ) d^2 = ( \vec{d} \cdot \vec{S} )^3 + D^* \,. \label{eq:dSd2}
	\end{equation}
	Thus
	\begin{equation}
		( \vec{d} \cdot \vec{S} ) d^2 ( \vec{d} \cdot \vec{S} ) = ( \vec{d} \cdot \vec{S} ) \left( ( \vec{d} \cdot \vec{S} )^3 + D \right) = ( \vec{d} \cdot \vec{S} )^4 + ( \vec{d} \cdot \vec{S} ) D \,, \label{eq:Summand1}
	\end{equation}
	having used Eq.\,\eqref{eq:d2dS}. However, its conjugate Eq.\,\eqref{eq:dSd2} implies
	\begin{equation}
		( \vec{d} \cdot \vec{S} ) d^2 ( \vec{d} \cdot \vec{S} ) = \left( ( \vec{d} \cdot \vec{S} )^3 + D^* \right)( \vec{d} \cdot \vec{S} ) = ( \vec{d} \cdot \vec{S} )^4 + D^* ( \vec{d} \cdot \vec{S} ) \,, \label{eq:Summand2}
	\end{equation}
	and the final claim follows from summing Eqs.\,(\ref{eq:Summand1},\ref{eq:Summand2}).
\end{proof}

\section{Interface states of the shallow-water model, complete calculation}\label{App:InterfaceDetails}

The following paragraphs are meant to derive the edge states and dispersion relations of Eqs.\,(\ref{eq:EdgeStatesSW},\ref{eq:EdgeChannels}), see Sec.\,\ref{sec:SW}. Such states are square-summable solutions of
\begin{equation}
	H_{SW} \psi = \omega \psi \,, \label{eq:EigenvalueEq}
\end{equation}
where $ H_{SW} $ is the shallow water Hamiltonian of Eq.\,\eqref{eq:HSW}
\begin{equation}
	H_{SW} = (p_1, p_2, f(x_2)) \cdot \vec{S} = (- \im \partial_1, - \im \partial_2, f(x_2)) \cdot \vec{S} \,, \label{eq:HSWApp}
\end{equation}
with $ f(x_2) = f \op{sgn} (x_2) $ ($ f > 0 $ constant) as our choice of angular velocity profile. In other words, and dropping the $ (\cdot)_{SW} $ subscripts here and below,
\begin{equation}
	H \rvert_{x_2 \gtrless 0} = H^\pm = (- \im \partial_1, -\im \partial_2, \pm f) \cdot \vec{S} \,.
\end{equation}
Notice that $ H^\pm $ are as in Eq.\,\eqref{eq:HpmSW}.

The Hamiltonian \eqref{eq:HSWApp} is translation invariant in direction $x_1$. Solutions of Eq.\,\eqref{eq:EigenvalueEq} must be of the form 
\begin{equation}
	\psi (\ul{x};k_1) = \eul^{\im k_1 x_1} \hat{\psi} (x_2;k_1) \,,
\end{equation}
and the eigenvalue problem can be rewritten fiber-wise as
\begin{equation}
	H (k_1) \hat{\psi} (x_2;k_1) = \omega (k_1) \hat{\psi} (x_2; k_1) \,, \qquad H(k_1) = (k_1, - \im \partial_2, f \op{sgn} (x_2)) \cdot \vec{S} \,. \label{eq:TISEApp}
\end{equation}

Before solving for $ \hat{\psi} $, let us notice that $H$ enjoys the following symmetry for any odd profile $ f(x_2) $ of the angular velocity:
\begin{equation}
	\Pi H \Pi^{-1} = H \,, \label{eq:Symm}
\end{equation}
where 
\begin{equation}
	\Pi \coloneqq \op{diag} (1,1,-1) \otimes \Pi_2 \equiv M \Pi_2 \,, \qquad \Pi^2 = \id \,, \label{eq:ExprPi}
\end{equation}
and $ \Pi_2 $ denotes the parity operator in direction $ x_2 $ (namely $ \Pi_2 g(x_2) = g(-x_2) $ for any function $g$).

By Eq.\,\eqref{eq:Symm}, $H$ preserves the even/odd parity sectors
\begin{equation}
	\Pi \psi = \pm \psi \,,
\end{equation}
a fact that will be employed later. 

Write $ \hat{\psi} $ in components as $ \hat{\psi} = (\eta, u, v) $, and let moreover
\begin{equation}
	\hat{\psi}_\pm \coloneqq \hat{\psi} \rvert_{x_2 \gtrless 0} = (\eta_\pm, u_\pm, v_\pm) \,. \label{eq:PsiSplit}
\end{equation}
Eq.\,\eqref{eq:TISEApp} is explicitly rewritten as
\begin{equation}
	\begin{array}{lcl}
		(H^\pm - \omega) \hat{\psi}_\pm = 0 & \Leftrightarrow & \hat{\psi}_\pm \in \op{Ker} \, (H^\pm - \omega) \\
		& \Leftrightarrow & 
		\begin{cases}
			- \omega \eta_\pm + k_1 u_\pm - \im v_\pm' = 0 \\
			k_1 \eta_\pm - \omega u_\pm \mp \im f v_\pm = 0 \\
			- \im \eta_\pm' \pm \im f u_\pm - \omega v_\pm = 0 
		\end{cases} \,,
	\end{array} \label{eq:pmSystem}
\end{equation}
where $ (\cdot)' = \partial_2 (\cdot) $, $ \xi = \xi (x_2;k_1) \,, \ (\xi = \eta, u,v) $ and the equations must hold for all $k_1 \in \bbR$. Continuity of $ \eta,v $ across $ x_2 = 0 $ follows from the general theory of differential equations, see e.g.\, Chapter 15 in Ref.\,\cite{DiffEqt}. By contrast, no continuity condition is imposed on $u$. 

Translation invariance in the two half-planes $ x_2 \gtrless 0 $ and the requirement of square-summability justify the following ansatz:
\begin{equation}
	\eta_\pm (x_2) = N_\pm \eul^{- \kappa |x_2|} \,, \qquad u_\pm (x_2) = U_\pm \eul^{- \kappa |x_2|} \,, \qquad v_\pm (x_2) = V_\pm \eul^{- \kappa |x_2|} \,, \qquad N_\pm, U_\pm, V_\pm \in \bbC \,, \label{eq:Ansaetze}
\end{equation}
with $ \kappa > 0 $ a positive constant.

Moreover, by continuity of $ \eta,v $
\begin{equation}
	N_+ = N_- \equiv N_0 \,, \qquad V_+ = V_- \equiv V_0 \,.
\end{equation}
Plugging the Ans\"atze \eqref{eq:Ansaetze} into Eq.\,\eqref{eq:pmSystem} leads to
\begin{equation}
	\begin{cases}
		- \omega N_0 + k_1 U_\pm \pm \im \kappa V_0 = 0 \\
		k_1 N_0 - \omega U_\pm \mp \im f V_0 = 0 \\
		\pm \im \kappa N_0 \pm \im f U_\pm - \omega V_0 = 0 \,.
	\end{cases} \label{eq:RedSystem}
\end{equation}
We now consider the even/odd cases $ \Pi \hat{\psi} = \pm \hat{\psi} $ (same as $\Pi \psi = \pm \psi$) separately.
\begin{itemize}
	\item \textbf{Even states.} Observe that
	\begin{equation}
		\Pi \hat{\psi} (x_2) = \hat{\psi} (x_2) \, \longleftrightarrow  \hat{\psi}_+ (x_2) = M \hat{\psi}_- (-x_2) \,, \label{eq:EvenState}
	\end{equation}
	cf.\,(\ref{eq:ExprPi},\ref{eq:PsiSplit}). Component-wise
	\begin{equation}
		\eta_+ (x_2) = \eta_- (-x_2) \,, \qquad u_+ (x_2) = u_- (-x_2) \,, \qquad v_+ (x_2) = - v_- (-x_2) \,,
	\end{equation}
	which combined with Eqs.\,\eqref{eq:Ansaetze} gives
	\begin{equation}
		N_+ = N_- = N_0 \,, \qquad U_+ = U_- \equiv U_0 \,, \qquad V_+ = - V_- \,. 
	\end{equation}
	The first equality is known by continuity of $\eta$. The last one complies with $ v $ continuous if and only if 
	\begin{equation}
		v(x_2) \equiv 0 \ \leftrightarrow \ V_0 = 0 \,. \label{eq:VZeroIsZero}
	\end{equation} 
	System \eqref{eq:RedSystem} reduces to
	\begin{equation}
		\begin{cases}
			- \omega N_0 + k_1 U_0 = 0 \\
			k_1 N_0 - \omega U_0 = 0 \\
			\kappa N_0 + f U_0 = 0
		\end{cases} \ \leftrightarrow \
		\begin{cases}
			\omega N_0 = k_1 U_0 \\
			(k_1^2-\omega^2) U_0 = 0 \\
			(k_1 \kappa + \omega f) U_0 = 0 \,.
		\end{cases}
	\end{equation}
	Besides the trivial one $ N_0 = U_0 = 0 $, there exists exactly one solution meeting our requirement $ \kappa > 0 $, namely
	\begin{equation}
		\kappa = f \,, \qquad \omega = - k_1 \,, \qquad U_0 = - N_0 \,. \label{eq:fKappa}
	\end{equation}
	We have thus recovered the edge channel $ \omega_a (k_1) $ of Eq.\,\eqref{eq:EdgeChannels}. By Eqs.\,(\ref{eq:Ansaetze},\ref{eq:VZeroIsZero},\ref{eq:fKappa}), the corresponding state reads
	\begin{equation}
		\hat{\psi}_a (x_2;k_1) = C_a \eul^{- f |x_2|} 
		\begin{pmatrix}
			1 \\
			- 1 \\
			0
		\end{pmatrix} \,,
	\end{equation}
	with $ C_a \in \bbC$ some normalization constant. $ \psi_a (\ul{x};k_1) $ as in Eq.\,\eqref{eq:EdgeStatesSW} is finally obtained by
	\begin{equation}
		\psi_a (\ul{x};k_1) = \eul^{\im k_1 x_1} \hat{\psi}_a (x_2;k_1) \,.
	\end{equation}

	\item \textbf{Odd states.} The procedure is identical. Enforcing
	\begin{equation}
		\Pi \hat{\psi} (x_2) = - \hat{\psi} (x_2) \, \longleftrightarrow  \hat{\psi}_+ (x_2) = - M \hat{\psi}_- (-x_2) \,, \label{eq:OddState}
	\end{equation}
	produces
	\begin{equation}
		\eta (x_2) \equiv 0 \,, \qquad u_+ (x_2) = - u_- (-x_2) \,, \qquad v_+ (x_2) = v_- (-x_2) \,,
	\end{equation}
	whence 
	\begin{equation}
		N_0 = 0 \,, \qquad U_+ = - U_- \qquad V_+ = V_- = V_0 
	\end{equation}
	and
	\begin{equation}
		\begin{cases}
			k_1 U_+ = - \im \kappa V_0 \\
			(f k_1 - \omega \kappa) V_0 = 0 \\
			(f \kappa - \omega k_1) V_0 = 0 \,.
		\end{cases}
	\end{equation}
	The only non-trivial solution is
	\begin{equation}
		\kappa = | k_1 | \,, \qquad \omega = f \op{sgn} (k_1) \,, \qquad U_+ = - \im \op{sgn} (k_1) V_0 \,.
	\end{equation}
	The middle quantity is indeed the \textit{odd} edge channel $ \omega_b (k_1) $ of Eq.\,\eqref{eq:EdgeChannels}. The corresponding state reads (cf.\,Eq.\,\eqref{eq:OddState})
	\begin{equation}
		\hat{\psi}_b (x_2;k_1) = C_b \eul^{- |k_1| |x_2|} 
		\begin{pmatrix}
			0 \\
			\op{sgn} (x_2) \op{sgn} (k_1) \\
			\im
		\end{pmatrix} \,,
	\end{equation}
	or equivalently
	\begin{equation}
		\psi_b (\ul{x};k_1) = \eul^{\im k_1 x_1} \hat{\psi}_b (x_2;k_1) \,,
	\end{equation}
	as in Eq.\,\eqref{eq:EdgeStatesSW}.
\end{itemize}
\section{Failure of BEC: Dirac model on the half-plane} \label{app:BECFailureDirac}

In this Appendix, we clarify how bulk-edge correspondence is violated in the massive 2D Dirac model on a manifold with boundary. More specifically, we restrict $ H_+ $ (cf.\,Eq.\,\eqref{eq:HPlusMinus}) to the upper half-plane, identify its self-adjoint extensions, compute edge states and their dispersion relation. A single right-moving edge mode appears for half of the self-adjoint boundary conditions. The edge index is accordingly $0$ or $1$, never matching the bulk \enquote{invariant} $ C \hbar (\cal{E}_+) = +1/2 $ found in Eq.\,\eqref{eq:FakeChernPlusMinus}. 

In the interest of brevity, most of the calculations are omitted. A more detailed analysis can be found in \cite{Gruber06}.
\vspace{1\baselineskip} 

Let $ H^\#_+ $ denote the restriction of $ H_+ $ (as per Eq.\,\eqref{eq:HPlusMinus}) to the Hilbert space $ \cal{H}^\# = L^2 (\bbR \times \bbR_+) \otimes \bbC^2 $, where
\begin{equation}
	\bbR \times \bbR_+ = \{ (x_1,x_2) \in \bbR^2 \ | \ x_2 \geq 0 \} \,.
\end{equation}
Translation invariance in direction $ x_1 $ is retained, and the operator $ H^\#_+ $ can be written fiber-wise upon partial Fourier transform $ p_1 \mapsto k_1 $,
\begin{align}
	H^\#_+ (k_1) \coloneqq  k_1 \sigma_1 + (-i \partial_2) \sigma_2 + m_+ \sigma_3 \,, \qquad m_+ > 0 \,. 
\end{align}
This object is symmetric $ \forall k_1 $. Its deficiency indices $ n_\pm $ are equal, $ n_+ = n_- = +1 $, and by von Neumann's theorem (see Chapter 3 of \cite{SAExt}) self-adjoint extensions $H^\#_+ (k_1,z) $ of this operator are parameterized by a unitary $ z \in U(1) \cong S^1$. States
\begin{equation}
	\psi (\ul{x};k_1) = \eul^{\im k_1 x_1} \hat{\psi} (x_2;k_1) \label{eq:StatesCutDirac}
\end{equation} 
in the domain of $H^\#_+ (k_1,z) $ must then satisfy the following boundary conditions, expressed in terms of the phase $ z $: 
\begin{align} 
	\hat{\psi} (x_2 = 0; k_1) \in V_{\mathit{boundary}} (z) \coloneqq \mathrm{span}_\mathbb{C}(v_0), \qquad 
	v_0 = 
	\begin{pmatrix}
		\im + z   \\
		1 + \im z
	\end{pmatrix} \,. 
\end{align}
The boundary conditions, and thus in turn $z$, are assumed independent of $ k_1 $ (the same assumption appears in \cite{Gruber06}, see comment right above Section 3). 

Having finished the characterisation of the set of self-adjoint boundary conditions, we turn our attention to edge states and their dispersion relations. 

We call a vector $ \psi (\ul{x};k_1) $ as in Eq.\,\eqref{eq:StatesCutDirac} an \textit{edge state} if it is a bound eigenstate of some self-adjoint extension $H^\#_+ (k_1,z) $. In more precise terms, it must satisfy the following (for all $k_1$):
\begin{enumerate}
	\item $ \hat{\psi}(0;k_1) \in  V_{\mathit{boundary}} (z) \, $;
	
	\item $ \hat{\psi}(x_2;k_1) \in L^2 (\bbR_+) \otimes \bbC^2 \,$;
	
	\item $ H^\#_+ (k_1,z) \hat{\psi}(x_2;k_1) = \omega (k_1,z) \hat{\psi}(x_2;k_1)$ for some dispersion relation $\omega (k_1,z) \,$.
\end{enumerate}
The eigenstate condition is a linear differential equation in $ x_2 $, thus suggesting the Ansatz 
\begin{align}
	\hat{\psi}(x_2;k_1) = 
	\begin{pmatrix}
		A (k_1)  \\
		B (k_1)
	\end{pmatrix} \eul^{-\im \kappa x_2} \,, \label{eq:Ansatz}
\end{align} 
where $ \kappa $ satisfies
\begin{equation}
	\kappa^2 = \omega^2 -k_1^2 - m_+^2 \,. \label{eq:Kappa}
\end{equation}
Using Ansatz \eqref{eq:Ansatz} and imposing $ \hat{\psi}(x_2;k_1) \in V_{\mathit{boundary}} (z) $ leads to the following dispersion relation: 
\begin{align}
	\omega  & = (- k_1 (1+z^2) + \im m_+  (1-z^2) ) \frac{\Bar{z}}{2} \nonumber \\ 
	& = - k_1 \mathrm{Re}(z) + m_+ \, \mathrm{Im}(z) \,, \label{eq:Dispersions}
\end{align}
and accordingly
\begin{equation}
	\kappa = - \im \left( k_1 \op{Im} (z) + m_+ \op{Re} (z) \right) \,. \label{eq:KappaExplicit}
\end{equation}
Square-summability of the wavefunction requires $\kappa \in i \mathbb{R}_-$. Imposing $ \op{Im} \kappa \leq 0  $ onto the $\kappa$ of Eq.\,\eqref{eq:KappaExplicit} yields the condition
\begin{align}
	\mathrm{Im}(z) k_1 \geq - \mathrm{Re}(z) m_+ \,. \label{eq:KappaGeqZero}
\end{align}
	
Let us now identify the \textit{merging points} $ (k_1^\star (z), \omega^\star (z)) $, where the edge channels of Eq.\,\eqref{eq:Dispersions} meet the bulk bands. Such merging points are signaled by $ \kappa = 0 $, i.e.\,by delocalization of the bound states. If $ \kappa = 0$, Eq.\,\eqref{eq:Kappa} states $\omega (k_1,z)^2 = k_1^2 + m_+^2$. Combining this with Eq.\,\eqref{eq:Dispersions} yields
\begin{align}
    k_1^\star (z) = -\frac{\mathrm{Re}(z)}{\mathrm{Im}(z)} m_+ \,, \qquad
    \omega^\star (z) = \frac{1}{\mathrm{Im}(z)} m_+ \,. \label{eq:MergingPoints}
\end{align}
The edge index equals the signed number of merging points meeting the \textit{positive} band, namely with $ \omega^\star (z) > 0 $. By Eq.\,\eqref{eq:MergingPoints}, a single bound state \textit{being born} from the positive energy band exists if $ \op{Im} (z) > 0 $. In such cases, $ \cal{I}^\# = 1 $. By contrast, $ \op{Im} (z) < 0 \Rightarrow \omega^\star (z) < 0 $ and the point spectrum only encounters the negative bulk band, giving no net contribution to the edge index, which thus amounts to $ \cal{I}^\# = 0 $.

The violation of bulk-edge correspondence illustrated in the previous paragraph is also made explicit in the table below.
\begin{table}[h!]
	\centering
	\begin{tabularx}{0.6\linewidth}{Y|Y|Y}
		 & $ \op{Im} (z) > 0 $ & $ \op{Im} (z) < 0 $ \\ \hline \hline
		$\mathrm{C}\hbar(\cal{E}_+)$ & $ + 1/2 $   & $ + 1/2 $ \\ \hline
		$\op{deg} \vec{e}^{\, \infty}_+ $ & $+1$  & $+1$ \\  \hline 
		$ \cal{I}^\#$ & $ +1 $ & $ 0 $ 
	\end{tabularx}
	\caption{Values of the different indices for the positive energy bundle. The degree of $\vec{e}^{\, \infty}_+$ should be read in the light of remark \ref{rem:Degree}.}
	\label{Index_simple}
\end{table}



\end{appendices}

\section*{Declarations}

\subsection*{Funding and/or Conflicts of interests/Competing interests} 

The work presented in this submission is entirely original and has not been previously published, nor is it currently under consideration for publication elsewhere. Moreover, we have taken great care to avoid any form of plagiarism. To the best of our current knowledge, all results and ideas obtained from external sources have been appropriately acknowledged and cited. Finally, we wish to declare that there are no conflicts of interest related to this research, and no specific funding was received to support this study. 

\subsection*{Data availability statement}

Data sharing not applicable to this article as no datasets were generated or analysed during the current study.

\bibliography{bib}


\begin{thebibliography}{46}
\ifx \bisbn   \undefined \def \bisbn  #1{ISBN #1}\fi
\ifx \binits  \undefined \def \binits#1{#1}\fi
\ifx \bauthor  \undefined \def \bauthor#1{#1}\fi
\ifx \batitle  \undefined \def \batitle#1{#1}\fi
\ifx \bjtitle  \undefined \def \bjtitle#1{#1}\fi
\ifx \bvolume  \undefined \def \bvolume#1{\textbf{#1}}\fi
\ifx \byear  \undefined \def \byear#1{#1}\fi
\ifx \bissue  \undefined \def \bissue#1{#1}\fi
\ifx \bfpage  \undefined \def \bfpage#1{#1}\fi
\ifx \blpage  \undefined \def \blpage #1{#1}\fi
\ifx \burl  \undefined \def \burl#1{\textsf{#1}}\fi
\ifx \doiurl  \undefined \def \doiurl#1{\url{https://doi.org/#1}}\fi
\ifx \betal  \undefined \def \betal{\textit{et al.}}\fi
\ifx \binstitute  \undefined \def \binstitute#1{#1}\fi
\ifx \binstitutionaled  \undefined \def \binstitutionaled#1{#1}\fi
\ifx \bctitle  \undefined \def \bctitle#1{#1}\fi
\ifx \beditor  \undefined \def \beditor#1{#1}\fi
\ifx \bpublisher  \undefined \def \bpublisher#1{#1}\fi
\ifx \bbtitle  \undefined \def \bbtitle#1{#1}\fi
\ifx \bedition  \undefined \def \bedition#1{#1}\fi
\ifx \bseriesno  \undefined \def \bseriesno#1{#1}\fi
\ifx \blocation  \undefined \def \blocation#1{#1}\fi
\ifx \bsertitle  \undefined \def \bsertitle#1{#1}\fi
\ifx \bsnm \undefined \def \bsnm#1{#1}\fi
\ifx \bsuffix \undefined \def \bsuffix#1{#1}\fi
\ifx \bparticle \undefined \def \bparticle#1{#1}\fi
\ifx \barticle \undefined \def \barticle#1{#1}\fi
\bibcommenthead
\ifx \bconfdate \undefined \def \bconfdate #1{#1}\fi
\ifx \botherref \undefined \def \botherref #1{#1}\fi
\ifx \url \undefined \def \url#1{\textsf{#1}}\fi
\ifx \bchapter \undefined \def \bchapter#1{#1}\fi
\ifx \bbook \undefined \def \bbook#1{#1}\fi
\ifx \bcomment \undefined \def \bcomment#1{#1}\fi
\ifx \oauthor \undefined \def \oauthor#1{#1}\fi
\ifx \citeauthoryear \undefined \def \citeauthoryear#1{#1}\fi
\ifx \endbibitem  \undefined \def \endbibitem {}\fi
\ifx \bconflocation  \undefined \def \bconflocation#1{#1}\fi
\ifx \arxivurl  \undefined \def \arxivurl#1{\textsf{#1}}\fi
\csname PreBibitemsHook\endcsname

\bibitem[\protect\citeauthoryear{Dirac}{1928}]{Dirac1928}
\begin{barticle}
\bauthor{\bsnm{Dirac}, \binits{P.A.M.}}:
\batitle{{The quantum theory of the electron}}.
\bjtitle{Proc. R. Soc. Lond.}
\bvolume{117},
\bfpage{610}--\blpage{624}
(\byear{1928})
\doiurl{10.1098/rspa.1928.0023}
\end{barticle}
\endbibitem

\bibitem[\protect\citeauthoryear{Cayssol}{2013}]{Cayssol13}
\begin{barticle}
\bauthor{\bsnm{Cayssol}, \binits{J.}}:
\batitle{{Introduction to Dirac materials and topological insulators}}.
\bjtitle{C. R. Phys.}
\bvolume{14}(\bissue{9}),
\bfpage{760}--\blpage{778}
(\byear{2013})
\doiurl{10.1016/j.crhy.2013.09.012}
\end{barticle}
\endbibitem

\bibitem[\protect\citeauthoryear{Wehling et~al.}{2014}]{Wehling14}
\begin{barticle}
\bauthor{\bsnm{Wehling}, \binits{T.O.}},
\bauthor{\bsnm{Black-Schaffer}, \binits{A.M.}},
\bauthor{\bsnm{Balatsky}, \binits{A.V.}}:
\batitle{{Dirac materials}}.
\bjtitle{Adv. in Phys.}
\bvolume{63}(\bissue{1}),
\bfpage{1}--\blpage{76}
(\byear{2014})
\doiurl{10.1080/00018732.2014.927109}
\end{barticle}
\endbibitem

\bibitem[\protect\citeauthoryear{Yan and Felser}{2017}]{Yan17}
\begin{barticle}
\bauthor{\bsnm{Yan}, \binits{B.}},
\bauthor{\bsnm{Felser}, \binits{C.}}:
\batitle{{Topological Materials: Weyl Semimetals}}.
\bjtitle{Ann. Rev. of Cond. Mat. Phys.}
\bvolume{8}(\bissue{1}),
\bfpage{337}--\blpage{354}
(\byear{2017})
\doiurl{10.1146/annurev-conmatphys-031016-025458}
\end{barticle}
\endbibitem

\bibitem[\protect\citeauthoryear{Armitage et~al.}{2018}]{Armitage18}
\begin{barticle}
\bauthor{\bsnm{Armitage}, \binits{N.P.}},
\bauthor{\bsnm{Mele}, \binits{E.J.}},
\bauthor{\bsnm{Vishwanath}, \binits{A.}}:
\batitle{{Weyl and Dirac semimetals in three-dimensional solids}}.
\bjtitle{Rev. Mod. Phys.}
\bvolume{90},
\bfpage{015001}
(\byear{2018})
\doiurl{10.1103/RevModPhys.90.015001}
\end{barticle}
\endbibitem

\bibitem[\protect\citeauthoryear{Geim and Novoselov}{2007}]{Geim07}
\begin{barticle}
\bauthor{\bsnm{Geim}, \binits{A.K.}},
\bauthor{\bsnm{Novoselov}, \binits{K.S.}}:
\batitle{{The rise of Graphene}}.
\bjtitle{Nature Mater.}
\bvolume{6},
\bfpage{183}--\blpage{191}
(\byear{2007})
\doiurl{10.1038/nmat1849}
\end{barticle}
\endbibitem

\bibitem[\protect\citeauthoryear{Vafek and Vishwanath}{2014}]{VV14}
\begin{barticle}
\bauthor{\bsnm{Vafek}, \binits{O.}},
\bauthor{\bsnm{Vishwanath}, \binits{A.}}:
\batitle{{Dirac Fermions in Solids: From High-Tc Cuprates and Graphene to
  Topological Insulators and Weyl Semimetals}}.
\bjtitle{Ann. Rev. Cond. Mat. Phys.}
\bvolume{5}(\bissue{1}),
\bfpage{83}--\blpage{112}
(\byear{2014})
\doiurl{10.1146/annurev-conmatphys-031113-133841}
\end{barticle}
\endbibitem

\bibitem[\protect\citeauthoryear{Altland and Zirnbauer}{1997}]{TenfoldWay}
\begin{barticle}
\bauthor{\bsnm{Altland}, \binits{A.}},
\bauthor{\bsnm{Zirnbauer}, \binits{M.R.}}:
\batitle{{Nonstandard symmetry classes in mesoscopic normal-superconducting
  hybrid structures}}.
\bjtitle{Phys. Rev. B}
\bvolume{55},
\bfpage{1142}--\blpage{1161}
(\byear{1997})
\doiurl{10.1103/PhysRevB.55.1142}
\end{barticle}
\endbibitem

\bibitem[\protect\citeauthoryear{Kitaev}{2009}]{KitaevTable}
\begin{barticle}
\bauthor{\bsnm{Kitaev}, \binits{A.}}:
\batitle{{Periodic table for topological insulators and superconductors}}.
\bjtitle{AIP Conf. Proc.}
\bvolume{1134}(\bissue{1}),
\bfpage{22}--\blpage{30}
(\byear{2009})
\doiurl{10.1063/1.3149495}
\end{barticle}
\endbibitem

\bibitem[\protect\citeauthoryear{Schnyder et~al.}{2008}]{Schnyder08}
\begin{barticle}
\bauthor{\bsnm{Schnyder}, \binits{A.P.}},
\bauthor{\bsnm{Ryu}, \binits{S.}},
\bauthor{\bsnm{Furusaki}, \binits{A.}},
\bauthor{\bsnm{Ludwig}, \binits{A.W.W.}}:
\batitle{{Classification of topological insulators and superconductors in three
  spatial dimensions}}.
\bjtitle{Phys. Rev. B}
\bvolume{78},
\bfpage{195125}
(\byear{2008})
\doiurl{10.1103/PhysRevB.78.195125}
\end{barticle}
\endbibitem

\bibitem[\protect\citeauthoryear{Witten}{2016}]{Witten16}
\begin{barticle}
\bauthor{\bsnm{Witten}, \binits{E.}}:
\batitle{{Three lectures on topological phases of matter}}.
\bjtitle{Riv. del Nuovo Cim.}
\bvolume{39}(\bissue{7}),
\bfpage{313}--\blpage{370}
(\byear{2016})
\doiurl{10.1393/ncr/i2016-10125-3}
\end{barticle}
\endbibitem

\bibitem[\protect\citeauthoryear{Peskin and Schroeder}{1995}]{PeskinQFT}
\begin{bbook}
\bauthor{\bsnm{Peskin}, \binits{M.E.}},
\bauthor{\bsnm{Schroeder}, \binits{D.V.}}:
\bbtitle{{An Introduction to Quantum Field Theory}}.
\bpublisher{Westview},
\blocation{Boulder, CO}
(\byear{1995})
\end{bbook}
\endbibitem

\bibitem[\protect\citeauthoryear{Hartshorne}{2010}]{Hartshorne}
\begin{bbook}
\bauthor{\bsnm{Hartshorne}, \binits{R.}}:
\bbtitle{{Algebraic Geometry}}.
\bpublisher{Springer},
\blocation{NY}
(\byear{2010})
\end{bbook}
\endbibitem

\bibitem[\protect\citeauthoryear{Thouless et~al.}{1982}]{TKNN}
\begin{barticle}
\bauthor{\bsnm{Thouless}, \binits{D.J.}},
\bauthor{\bsnm{Kohmoto}, \binits{M.}},
\bauthor{\bsnm{Nightingale}, \binits{M.P.}},
\bauthor{\bsnm{Nijs}, \binits{M.}}:
\batitle{{Quantized Hall Conductance in a Two-Dimensional Periodic Potential}}.
\bjtitle{Phys. Rev. Lett.}
\bvolume{49},
\bfpage{405}--\blpage{408}
(\byear{1982})
\doiurl{10.1103/PhysRevLett.49.405}
\end{barticle}
\endbibitem

\bibitem[\protect\citeauthoryear{Prodan and Schulz-Baldes}{2016}]{ProdanSB}
\begin{bbook}
\bauthor{\bsnm{Prodan}, \binits{E.}},
\bauthor{\bsnm{Schulz-Baldes}, \binits{H.}}:
\bbtitle{{Bulk and Boundary Invariants for Complex Topological Insulators: From
  K-Theory to Physics}}.
\bpublisher{Springer},
\blocation{CH}
(\byear{2016})
\end{bbook}
\endbibitem

\bibitem[\protect\citeauthoryear{Chern}{1944}]{GaussBonnet}
\begin{barticle}
\bauthor{\bsnm{Chern}, \binits{S.-S.}}:
\batitle{{A Simple Intrinsic Proof of the Gauss-Bonnet Formula for Closed
  Riemannian Manifolds}}.
\bjtitle{Ann. Math.}
\bvolume{45}(\bissue{4}),
\bfpage{747}--\blpage{752}
(\byear{1944})
\doiurl{10.2307/1969302}
\end{barticle}
\endbibitem

\bibitem[\protect\citeauthoryear{Angelakis et~al.}{2014}]{Angelakis14}
\begin{barticle}
\bauthor{\bsnm{Angelakis}, \binits{D.}},
\bauthor{\bsnm{Das}, \binits{P.}},
\bauthor{\bsnm{Noh}, \binits{C.}}:
\batitle{{Probing the topological properties of the Jackiw-Rebbi model with
  light}}.
\bjtitle{Sci. Rep.}
\bvolume{4},
\bfpage{6110}
(\byear{2014})
\doiurl{10.1038/srep06110}
\end{barticle}
\endbibitem

\bibitem[\protect\citeauthoryear{Grosse}{1986}]{Grosse86}
\begin{barticle}
\bauthor{\bsnm{Grosse}, \binits{H.}}:
\batitle{{New solitons connected to the Dirac equation}}.
\bjtitle{Phys. Rep.}
\bvolume{134}(\bissue{5}),
\bfpage{297}--\blpage{304}
(\byear{1986})
\doiurl{10.1016/0370-1573(86)90053-0}
\end{barticle}
\endbibitem

\bibitem[\protect\citeauthoryear{Hu et~al.}{2023}]{Hu22}
\begin{barticle}
\bauthor{\bsnm{Hu}, \binits{P.}},
\bauthor{\bsnm{Xie}, \binits{P.}},
\bauthor{\bsnm{Zhu}, \binits{Y.}}:
\batitle{{Traveling edge states in massive Dirac equations along slowly varying
  edges}}.
\bjtitle{IMA J. App. Math.}
\bvolume{88}(\bissue{3}),
\bfpage{455}--\blpage{471}
(\byear{2023})
\doiurl{10.1093/imamat/hxad015}
\end{barticle}
\endbibitem

\bibitem[\protect\citeauthoryear{Hatsugai}{1993a}]{Hatsugai93PRL}
\begin{barticle}
\bauthor{\bsnm{Hatsugai}, \binits{Y.}}:
\batitle{{Chern number and edge states in the integer quantum Hall effect}}.
\bjtitle{Phys. Rev. Lett.}
\bvolume{71},
\bfpage{3697}--\blpage{3700}
(\byear{1993})
\doiurl{10.1103/PhysRevLett.71.3697}
\end{barticle}
\endbibitem

\bibitem[\protect\citeauthoryear{Hatsugai}{1993b}]{Hatsugai93PRB}
\begin{barticle}
\bauthor{\bsnm{Hatsugai}, \binits{Y.}}:
\batitle{{Edge states in the integer quantum {H}all effect and the Riemann
  surface of the Bloch function}}.
\bjtitle{Phys. Rev. B}
\bvolume{48},
\bfpage{11851}--\blpage{11862}
(\byear{1993})
\doiurl{10.1103/PhysRevB.48.11851}
\end{barticle}
\endbibitem

\bibitem[\protect\citeauthoryear{Schulz-Baldes et~al.}{1999}]{SchulzBaldes99}
\begin{barticle}
\bauthor{\bsnm{Schulz-Baldes}, \binits{H.}},
\bauthor{\bsnm{Kellendonk}, \binits{J.}},
\bauthor{\bsnm{Richter}, \binits{T.}}:
\batitle{{Simultaneous quantization of edge and bulk Hall conductivity}}.
\bjtitle{J. Phys. A: Math. and Gen.}
\bvolume{33}(\bissue{2}),
\bfpage{27}--\blpage{32}
(\byear{1999})
\doiurl{10.1088/0305-4470/33/2/102}
\end{barticle}
\endbibitem

\bibitem[\protect\citeauthoryear{Elgart et~al.}{2005}]{Elgart05}
\begin{barticle}
\bauthor{\bsnm{Elgart}, \binits{A.}},
\bauthor{\bsnm{Graf}, \binits{G.M.}},
\bauthor{\bsnm{Schenker}, \binits{J.H.}}:
\batitle{{Equality of the Bulk and Edge Hall Conductances in a Mobility Gap}}.
\bjtitle{Commun. Math. Phys.}
\bvolume{259}(\bissue{1}),
\bfpage{185}--\blpage{221}
(\byear{2005})
\doiurl{10.1007/s00220-005-1369-7}
\end{barticle}
\endbibitem

\bibitem[\protect\citeauthoryear{Elbau and Graf}{2002}]{Elbau02}
\begin{barticle}
\bauthor{\bsnm{Elbau}, \binits{P.}},
\bauthor{\bsnm{Graf}, \binits{G.M.}}:
\batitle{{Equality of Bulk and Edge Hall Conductance Revisited}}.
\bjtitle{Commun. Math. Phys.}
\bvolume{229}(\bissue{3}),
\bfpage{415}--\blpage{432}
(\byear{2002})
\doiurl{10.1007/s00220-002-0698-z}
\end{barticle}
\endbibitem

\bibitem[\protect\citeauthoryear{Shapiro}{2020}]{Shapiro20}
\begin{barticle}
\bauthor{\bsnm{Shapiro}, \binits{J.}}:
\batitle{{The bulk-edge correspondence in three simple cases}}.
\bjtitle{Rev. Math. Phys.}
\bvolume{32}(\bissue{03}),
\bfpage{2030003}
(\byear{2020})
\doiurl{10.1142/S0129055X20300034}
\end{barticle}
\endbibitem

\bibitem[\protect\citeauthoryear{Gruber and Leitner}{2006}]{Gruber06}
\begin{barticle}
\bauthor{\bsnm{Gruber}, \binits{M.J.}},
\bauthor{\bsnm{Leitner}, \binits{M.}}:
\batitle{{Spontaneous Edge Currents for the Dirac Equation in Two Space
  Dimensions}}.
\bjtitle{Lett. Math. Phys.}
\bvolume{75}(\bissue{1}),
\bfpage{25}--\blpage{37}
(\byear{2006})
\doiurl{10.1007/s11005-005-0036-4}
\end{barticle}
\endbibitem

\bibitem[\protect\citeauthoryear{Jackiw and Rebbi}{1976}]{JackiwRebbi}
\begin{barticle}
\bauthor{\bsnm{Jackiw}, \binits{R.}},
\bauthor{\bsnm{Rebbi}, \binits{C.}}:
\batitle{{Solitons with fermion number \textonehalf{} }}.
\bjtitle{Phys. Rev. D}
\bvolume{13},
\bfpage{3398}--\blpage{3409}
(\byear{1976})
\doiurl{10.1103/PhysRevD.13.3398}
\end{barticle}
\endbibitem

\bibitem[\protect\citeauthoryear{Bal}{2022}]{BalInterface}
\begin{barticle}
\bauthor{\bsnm{Bal}, \binits{G.}}:
\batitle{{Topological invariants for interface modes}}.
\bjtitle{Commun. Partial. Differ. Equ.}
\bvolume{47}(\bissue{8}),
\bfpage{1636}--\blpage{1679}
(\byear{2022})
\doiurl{10.1080/03605302.2022.2070852}
\end{barticle}
\endbibitem

\bibitem[\protect\citeauthoryear{Hasan and Kane}{2010}]{HK10}
\begin{barticle}
\bauthor{\bsnm{Hasan}, \binits{M.Z.}},
\bauthor{\bsnm{Kane}, \binits{C.L.}}:
\batitle{{\textit{Colloquium}: Topological insulators}}.
\bjtitle{Rev. Mod. Phys.}
\bvolume{82},
\bfpage{3045}--\blpage{3067}
(\byear{2010})
\doiurl{10.1103/RevModPhys.82.3045}
\end{barticle}
\endbibitem

\bibitem[\protect\citeauthoryear{Bal et~al.}{2023}]{BHW23}
\begin{barticle}
\bauthor{\bsnm{Bal}, \binits{G.}},
\bauthor{\bsnm{Hoskins}, \binits{J.G.}},
\bauthor{\bsnm{Wang}, \binits{Z.}}:
\batitle{Asymmetric transport computations in dirac models of topological
  insulators}.
\bjtitle{Journal of Computational Physics}
\bvolume{487},
\bfpage{112151}
(\byear{2023})
\doiurl{10.1016/j.jcp.2023.112151}
\end{barticle}
\endbibitem

\bibitem[\protect\citeauthoryear{Bal}{2023}]{Bal22}
\begin{barticle}
\bauthor{\bsnm{Bal}, \binits{G.}}:
\batitle{{Topological charge conservation for continuous insulators}}.
\bjtitle{J. Math. Phys.}
\bvolume{64}(\bissue{3}),
\bfpage{031508}
(\byear{2023})
\doiurl{10.1063/5.0102607}
\end{barticle}
\endbibitem

\bibitem[\protect\citeauthoryear{Quinn and Bal}{2023}]{QB23}
\begin{barticle}
\bauthor{\bsnm{Quinn}, \binits{S.}},
\bauthor{\bsnm{Bal}, \binits{G.}}:
\batitle{{Approximations of interface topological invariants}}.
\bjtitle{arXiv}
(\byear{2023})
\doiurl{10.48550/arXiv.2112.02686}
\end{barticle}
\endbibitem

\bibitem[\protect\citeauthoryear{Nielsen and Ninomiya}{1981a}]{NNFirst}
\begin{barticle}
\bauthor{\bsnm{Nielsen}, \binits{H.B.}},
\bauthor{\bsnm{Ninomiya}, \binits{M.}}:
\batitle{{Absence of neutrinos on a lattice: (I). Proof by homotopy theory}}.
\bjtitle{Nucl. Phys. B}
\bvolume{185}(\bissue{1}),
\bfpage{20}--\blpage{40}
(\byear{1981})
\doiurl{10.1016/0550-3213(81)90361-8}
\end{barticle}
\endbibitem

\bibitem[\protect\citeauthoryear{Nielsen and Ninomiya}{1981b}]{NNSecond}
\begin{barticle}
\bauthor{\bsnm{Nielsen}, \binits{H.B.}},
\bauthor{\bsnm{Ninomiya}, \binits{M.}}:
\batitle{{Absence of neutrinos on a lattice: (II). Intuitive topological
  proof}}.
\bjtitle{Nucl. Phys B}
\bvolume{193}(\bissue{1}),
\bfpage{173}--\blpage{194}
(\byear{1981})
\doiurl{10.1016/0550-3213(81)90524-1}
\end{barticle}
\endbibitem

\bibitem[\protect\citeauthoryear{Delplace et~al.}{2017}]{DMV17}
\begin{barticle}
\bauthor{\bsnm{Delplace}, \binits{P.}},
\bauthor{\bsnm{Marston}, \binits{J.B.}},
\bauthor{\bsnm{Venaille}, \binits{A.}}:
\batitle{{Topological origin of equatorial waves}}.
\bjtitle{Science}
\bvolume{358}(\bissue{6366}),
\bfpage{1075}--\blpage{1077}
(\byear{2017})
\doiurl{10.1126/science.aan8819}
\end{barticle}
\endbibitem

\bibitem[\protect\citeauthoryear{Iga}{1995}]{Iga95}
\begin{barticle}
\bauthor{\bsnm{Iga}, \binits{K.}}:
\batitle{{Transition modes of rotating shallow water waves in a channel}}.
\bjtitle{J. Fluid. Mech.}
\bvolume{294},
\bfpage{367}--\blpage{390}
(\byear{1995})
\doiurl{10.1017/S002211209500293X}
\end{barticle}
\endbibitem

\bibitem[\protect\citeauthoryear{Tauber et~al.}{2019}]{Tauber19}
\begin{barticle}
\bauthor{\bsnm{Tauber}, \binits{C.}},
\bauthor{\bsnm{Delplace}, \binits{P.}},
\bauthor{\bsnm{Venaille}, \binits{A.}}:
\batitle{{A bulk-interface correspondence for equatorial waves}}.
\bjtitle{J. Fluid. Mech.}
\bvolume{868},
\bfpage{2}
(\byear{2019})
\doiurl{10.1017/jfm.2019.233}
\end{barticle}
\endbibitem

\bibitem[\protect\citeauthoryear{Tauber et~al.}{2020}]{TDV20}
\begin{barticle}
\bauthor{\bsnm{Tauber}, \binits{C.}},
\bauthor{\bsnm{Delplace}, \binits{P.}},
\bauthor{\bsnm{Venaille}, \binits{A.}}:
\batitle{{Anomalous bulk-edge correspondence in continuous media}}.
\bjtitle{Phys. Rev. Res.}
\bvolume{2},
\bfpage{013147}
(\byear{2020})
\doiurl{10.1103/PhysRevResearch.2.013147}
\end{barticle}
\endbibitem

\bibitem[\protect\citeauthoryear{Graf et~al.}{2021}]{GJT21}
\begin{barticle}
\bauthor{\bsnm{Graf}, \binits{G.M.}},
\bauthor{\bsnm{Jud}, \binits{H.}},
\bauthor{\bsnm{Tauber}, \binits{C.}}:
\batitle{{Topology in Shallow-Water Waves: A Violation of Bulk-Edge
  Correspondence}}.
\bjtitle{Commun. in Math. Phys.}
\bvolume{383},
\bfpage{731}--\blpage{761}
(\byear{2021})
\doiurl{10.1007/s00220-021-03982-7}
\end{barticle}
\endbibitem

\bibitem[\protect\citeauthoryear{Cycon et~al.}{1978}]{SchroOps}
\begin{bbook}
\bauthor{\bsnm{Cycon}, \binits{H.L.}},
\bauthor{\bsnm{Kirsch}, \binits{W.}},
\bauthor{\bsnm{Froese}, \binits{R.G.}},
\bauthor{\bsnm{Simon}, \binits{B.}}:
\bbtitle{{Schr{\"o}dinger Operators}}.
\bpublisher{{Springer Berlin}},
\blocation{Heidelberg}
(\byear{1978})
\end{bbook}
\endbibitem

\bibitem[\protect\citeauthoryear{Graf and Porta}{2013}]{GP13}
\begin{barticle}
\bauthor{\bsnm{Graf}, \binits{G.M.}},
\bauthor{\bsnm{Porta}, \binits{M.}}:
\batitle{{Bulk-Edge Correspondence for Two-Dimensional Topological
  Insulators}}.
\bjtitle{Commun. Math. Phys.}
\bvolume{324},
\bfpage{851}--\blpage{895}
(\byear{2013})
\doiurl{10.1007/s00220-013-1819-6}
\end{barticle}
\endbibitem

\bibitem[\protect\citeauthoryear{Griffiths}{1995}]{Griffiths}
\begin{bbook}
\bauthor{\bsnm{Griffiths}, \binits{D.J.}}:
\bbtitle{{Introduction to Quantum Mechanics}}.
\bpublisher{Prentice Hall},
\blocation{Upper Saddle River, NJ}
(\byear{1995})
\end{bbook}
\endbibitem

\bibitem[\protect\citeauthoryear{Reed and Simon}{1979}]{RSIII}
\begin{bbook}
\bauthor{\bsnm{Reed}, \binits{M.}},
\bauthor{\bsnm{Simon}, \binits{B.}}:
\bbtitle{{Methods of Modern Mathematical Physics, Vol. III: Scattering
  Theory}}.
\bpublisher{Academic Press},
\blocation{Cambridge, MA}
(\byear{1979})
\end{bbook}
\endbibitem

\bibitem[\protect\citeauthoryear{Semmelmann and Weingart}{2010}]{Semmelmann10}
\begin{barticle}
\bauthor{\bsnm{Semmelmann}, \binits{U.}},
\bauthor{\bsnm{Weingart}, \binits{G.}}:
\batitle{{The Weitzenböck machine}}.
\bjtitle{Compos. Math.}
\bvolume{146}(\bissue{2}),
\bfpage{507}--\blpage{540}
(\byear{2010})
\doiurl{10.1112/s0010437x09004333}
\end{barticle}
\endbibitem

\bibitem[\protect\citeauthoryear{Kirsch and Smale}{1974}]{DiffEqt}
\begin{bbook}
\bauthor{\bsnm{Kirsch}, \binits{M.W.}},
\bauthor{\bsnm{Smale}, \binits{S.}}:
\bbtitle{{Differential Equations, Dynamical Systems and Linear Algebra}}.
\bpublisher{{Academic Press}},
\blocation{New York}
(\byear{1974})
\end{bbook}
\endbibitem

\bibitem[\protect\citeauthoryear{Gitman et~al.}{2012}]{SAExt}
\begin{bbook}
\bauthor{\bsnm{Gitman}, \binits{D.M.}},
\bauthor{\bsnm{Tyutin}, \binits{I.V.}},
\bauthor{\bsnm{Voronov}, \binits{B.L.}}:
\bbtitle{{Self-adjoint Extensions in Quantum Mechanics: General Theory and
  Applications to Schrödinger and Dirac Equations with Singular Potentials}}.
\bpublisher{Birkh{\"a}user Boston},
\blocation{MA}
(\byear{2012})
\end{bbook}
\endbibitem

\end{thebibliography}

\end{document}